\newtheorem{Definition}{Definition}
\newtheorem{Corollary}{Corollary}
\newtheorem{Theorem}{Theorem}
\newtheorem{Lemma}{Lemma} 
\newtheorem{Proposition}{Proposition}
 \title{The Wavelet Compressibility of \\Compound Poisson Processes}
\author{ Shayan~Aziznejad and~Julien~Fageot 
\thanks{Shayan Aziznejad is with the Biomedical Imaging Group, \'Ecole polytechnique f\'ed\'erale de Lausanne, 1015 Lausanne,
Switzerland (e-mail: shayan.aziznejad@epfl.ch).} 
\thanks{Julien Fageot is with the AudioVisual Communications Laboratory (LCAV), \'Ecole polytechnique f\'ed\'erale de Lausanne, 1015 Lausanne, Switzerland (e-mail: julien.fageot@epfl.ch)}
\thanks{This research was supported by   the Swiss National Science Foundation, Grants 200020\_184646/1, P2ELP2\_181759, and P400P2\_194364.}}
\begin{document}
 \maketitle
\thispagestyle{plain}
\pagestyle{plain}

\begin{abstract}
In this paper, we precisely quantify the wavelet compressibility of compound Poisson processes.  To that end, we  expand the given random process over the Haar wavelet basis and we analyse its asymptotic approximation properties. 
By only considering  the nonzero wavelet coefficients up to a given scale, what we call the  greedy approximation, we exploit the extreme sparsity of the wavelet expansion that derives from the piecewise-constant nature of compound Poisson processes.
More precisely, we provide  lower and upper bounds for the mean squared  error  of greedy approximation of compound Poisson processes. We are then able to deduce that the greedy approximation error has a  sub-exponential and super-polynomial  asymptotic behavior.   Finally, we provide numerical  experiments  to highlight the  remarkable ability of wavelet-based dictionaries  in  achieving highly compressible approximations of  compound Poisson processes. 
\end{abstract}

\begin{IEEEkeywords}
Compound Poisson processes, Haar wavelets, wavelet approximation, $M$-term approximation, sparse representation.
\end{IEEEkeywords}
 
 \section{Introduction}
    
    \subsection{Sparsity and the Limits of Gaussian Models}
    \label{subsec:GaussBad}
    The statistical modelling of data  plays a central role in numerous research domains, such as  signal processing \cite{gray2004introduction} and pattern recognition \cite{webb2003statistical}. 
    In that regard, Gaussian models  have been the first and by far the most considered ones, thanks to their desirable mathematical properties and relatively simple characterization. 
    For instance, the Karhunen-Lo\`eve transform (KLT) identifies the optimal basis for representing data with Gaussian priors~\cite{ahmed2012orthogonal} and Kalman filters  are optimal denoisers of Gaussian signals~\cite{kalman_new_1960}, both in the mean-square sense.
    These facts, among others, have made Gaussian statistical priors very convenient in practice. They also reveal the fundamental relationship between Fourier-based signal representations and Gaussian models. 

    However, it has been a long standing observation that Gaussian models fail to capture several key statistical properties of most naturally-occurring signals~\cite{Srivastava2003advances,Mumford2010pattern}. Indeed, the latter frequently have heavy-tailed  marginals~\cite{Pesquet2002stochastic,Unser2014sparse,Fageot2015statistics,sciacchitano2017image} or richer structure of dependencies than Gaussian ones~\cite{Huang1999statistics,gupta2018generalized}.
     Real-world signals are highly structured  and often admit concise representations, typically on wavelet bases that appear to be genuinely  versatile~\cite{Mallat1999,starck_sparse_2010}. 
    This has led to the current paradigm in modern data science  where \emph{sparsity} plays one of the central roles in  statistical learning~\cite{hastie2015statistical,Starck2013sparse} and signal modelling~\cite{elad2010sparse,rish2014sparse,Unser2014sparse}.
    Classical Gaussian priors cannot  model sparsity as they tend to produce poorly compressible signals~\cite{amini_compressibility_2011,Amini2014sparsity}. Many recent efforts in signal processing have been directed towards  the development of deterministic frameworks that are better tailored to the reconstruction or synthesis of sparse signals, such as traditional compressed sensing~\cite{donoho_compressed_2006,Candes2006sparse,Foucart2013mathematical} and its infinite-dimensional extensions~\cite{adcock2017breaking,Adcock2015generalized,eldar2012compressed,unser2017splines}.

    \subsection{Wavelets and Signal Representations}\label{sec:wavelets} 

The development of wavelet methods, based on the pioneering works of I. Daubechies, Y. Meyer, and S. Mallat in the late 80's~\cite{daubechies1988orthonormal,Mallat1989,meyer1992wavelets}, has shed  new lights on signal representation. Repeated numerical observations confirmed  that wavelet-based compression techniques such as JPEG-2000 \cite{christopoulos2000jpeg2000} outperform classical Fourier-based standards ({\it e.g.}, JPEG) for natural images. This is despite the fact that the discrete Fourier transform (DFT) and its real-valued counterpart, the discrete cosine transform (DCT)  \cite{ahmed_discrete_1974}, are asymptotically equivalent to KLT and, hence, are optimal for representing signals with Gaussian prior \cite{unser1984DCT}. 

Wavelets are celebrated for their excellent approximation properties for large classes of signals and functions~\cite{Devore1998Nterm}. They revived  the field of functional analysis~\cite{meyer1992wavelets,Triebel2008function}, culminating with the Abel prize of Yves Meyer in 2017 and feeding remarkable applications to various scientific and engineering fields~\cite{akansu2010emerging}.
One of the remarkable aspects of wavelets is that they are unconditional bases for many function spaces, including H\"older, Sobolev, and Besov spaces~\cite{meyer1992wavelets,Triebel2008function} which is a key property for studying the best $M$-term approximation in a given basis~\cite{donoho1993unconditional,Devore1998Nterm}. 

    \subsection{Probabilistic Models for Sparse and Analog Signals}
  As we have seen, probabilistic models beyond the Gaussian paradigm are of special interest for the modelling of sparse signals.
    A systematic attempt in this direction has been developed in the monograph~\cite{Unser2014sparse}.
    In this work, analog signals are modelled as solutions of stochastic differential equations driven by non-Gaussian L\'evy white noises. 
     The so-called \it{sparse stochastic processes} have been used to develop novel techniques for essential signal processing tasks, such as denoising \cite{kamilov_mmse_2013} and estimation \cite{bostan2013map} for signals with non-Gaussian priors. These methods have also been used  in biomedical image reconstruction \cite{Bostan.etal2013}, highlighting  the practical aspects of this new statistical framework.
 
    The simplest    class of non-Gaussian model  is the one of compound Poisson processes.
      The latter  are random piecewise constant functions with independent and stationary increments. As such, they are part of the family of L\'evy processes~\cite{bertoin1996levy},   which also includes the  Brownian motion. Compound Poisson processes are fully determined by the heights and locations of their countably many jumps. Contrary to  Brownian motion, they are part of the class of signals with finite rate of innovations~\cite{Vetterli2002FRI,dragotti2007sampling,Unser2011stochastic}, meaning that  their realizations on compact intervals can be fully encoded by finitely many numbers. This makes them particularly appealing for the modelling of highly-compressible piecewise constant signals.
     It has also been shown recently that any L\'evy process is the limit in law of compound Poisson processes whose rate of innovation tends to infinity \cite{fageot_gaussian_2018}. This theoretical observation permits the  development of methods for generating trajectories of L\'evy processes from compound Poisson processes, as exploited in~\cite{dadi2020genssp}.
     
     \subsection{Gaussian versus Poisson: Two Extreme Compressibility Behaviors}
    \label{subsec:GaussVsPoisson}
 
      The aforementioned class of L\'evy processes  (see Section \ref{prelim:levyprocess} for a formal definition) allows for various compressibility behaviors: the Brownian motion is the less compressible, while the compound Poisson ones are at the other extreme. 
        This compressibility hierarchy has been recently revealed in two different theoretical frameworks. 
    
    In the first one, the compressibility is measured via the speed of convergence of the best $M$-term approximation in wavelet bases.
    The decay rate of the best $M$-term error is known to be directly linked to the Besov regularity~\cite{Devore1998Nterm,Garrigos2004sharp}, which has been quantified for a broad class of L\'evy processes \cite{schilling1997feller,schilling1998growth,Veraar2010regularity,fageot2017Besov,fageot2017noisePositive,aziznejad2020wavelet}. Hence, the compressibility of L\'evy processes has already been characterized using this approach \cite{Ward2015SampTA,fageot2017term} and synthesized in \cite[Chapter 6]{fageot2017gaussian}.
    In a nutshell, state-of-the-art results show that the best $M$-term quadratic approximation error of the Brownian motion behaves asymptotically like $1/{M}$\footnote{More precisely, one can deduced from \cite{fageot2017term} that the wavelet approximation error of the Brownian motion decays almost surely faster that $1/M^{1-\epsilon}$ and slower than $1/M^{1+\epsilon}$ for any $\epsilon > 0$ when $M\rightarrow \infty$.}, while the same quantity decays  faster than any polynomial for compound Poisson processes~\cite[Theorems 4 and 5]{fageot2017term}. 

    In the second framework, 
     the compressibility of a L\'evy process is quantified in the information theoretic sense through the entropy of the underlying L\'evy white noise, as in~\cite{ghourchian2018compressible,fageot2020entropic}. These two frameworks are complementary and based on totally different tools, but they are consistent and lead to the same compressibility hierarchy. 
    
     \subsection{Contributions and Outline}
    
    This paper contributes to the analysis of the compressibility of L\'evy processes,  focusing on the compound Poisson    and Gaussian cases. 
    We consider the Haar wavelet approximations of these random processes and quantify the decay rate of their approximation error in the mean squared sense. 
    
    More precisely, we   focus on quantities such as
    \begin{equation} \label{eq:MSEdefinition}
        \mathbb{E} \left[ \lVert s - \mathrm{P}_M \{s\} \rVert_2^2 \right] 
    \end{equation}
    where $s$ is a compound Poisson process or the Brownian motion and $\mathrm{P}_M : L_2([0,1]) \rightarrow L_2([0,1])$ is a possibly nonlinear  approximation operator based on $M \geq 1$ Haar wavelet coefficients of the input function. We compare various approximation schemes, depending on which wavelet coefficients are chosen. The two best-known schemes are the linear and the best $M$-term approximation, albeit both suffer from practical limitations. On one hand, the linear scheme does not capture the sparsity that might be inherent in the signal of interest (see Proposition \ref{Prop:LinError}). On the other hand,  
    in order to {\it exactly} implement a compression scheme based on the best $M$-term approximation of the random process,  one needs to have access to  the full infinite set of
    wavelet coefficients. Without additional  knowledge on the wavelet expansion, the implementation may become cumbersome and not memory efficient, if not impossible.  This is why alternative approximation schemes have been proposed, most notably the ``tree approximation'' scheme which has brought significant attention in the literature \cite{cohen2001tree,baraniuk1999optimal,binev2004fast,baraniuk2002near}. 
    
      In the same spirit, we consider a very simple \it{greedy approximation scheme},
    in which  only  the first $M$  nonzero  wavelet coefficients are preserved. This scheme is well-suited to compound Poisson processes, for which most of the wavelet coefficients are zero due to their piecewise constancy. 

      Our main result is to provide lower and upper-bounds for the greedy approximation error in the mean-squared sense (Theorem \ref{Thm:Main}). 
     It essentially states that the mean-square error  of the Haar greedy approximation of the compound Poisson process $s$ behaves roughly as
    \begin{equation} \label{eq:mouais}
         \mathbb{E} \left[ \lVert s - \mathrm{P}^{\mathrm{greedy}}_M \{s\} \rVert_2^2 \right] \approx  \mathbb{E}  \left[2^{-\frac{ \boldsymbol{M}}{\boldsymbol{N}}}\right]
    \end{equation}
    where $N$ is the (random) number of jumps of $s$ (see \eqref{Eq:MSEbounds} for the precise meaning of \eqref{eq:mouais}).
    This allows us to deduce that the mean-square error decays faster than any polynomial, and slower than any exponential  (Theorem \ref{Thm:SubExpSupPoly}). 
    We also perform a similar analysis for the linear approximation of the compound Poisson process, as well as for  the linear and greedy approximations of the Brownian motion. This    highlights the specificity of the compound Poisson processes: the greedy approximation dramatically outperforms the linear scheme for compound Poisson processes, contrary to the Gaussian case. We summarize this situation in  Table I, where the main contribution is highlighted in bold.

     \begin{table}[t]\label{table:4cases}
	\renewcommand{\arraystretch}{1.3}
	\centering
	\caption{ The decay rate of the MSE \eqref{eq:MSEdefinition}  of  linear, greedy, and best $M$-term approximation schemes for compound Poisson processes and the  Brownian motion.  }
	\begin{scriptsize}
		\begin{tabular}{c|c c}
			\hline
			\hline
			 &  Brownian Motion & Compound Poisson \\
		 	\hline
		 	Linear& $\frac{1}{M}$ & $\frac{1}{M}$ \\ & & \\
            {\bf Greedy} & $\frac{1}{M}$ & $ \pmb{\mathbb{E}}_{\boldsymbol{N}\boldsymbol{\sim} \textbf{Pois}( \boldsymbol{\lambda})}{\bf \left[2^{-\frac{ \boldsymbol{M}}{\boldsymbol{N}}}\right]}$ \\ & & \\
			 Best & $\frac{1}{M}$ & $\ll M^{-k}, \forall k\in \mathbb{N}$ \\
			\hline \hline 
		\end{tabular}
	\end{scriptsize}
\end{table}
 
    Finally, we illustrate our theoretical findings with numerical examples in various cases. Specifically, we highlight that  the approximation error obtained within our method is    close  to the best $M$-term approximation. Moreover, we highlight the role of the wavelet dictionary by comparing the linear and best $M$-term schemes for compound Poisson processes and the Brownian motion in a Fourier-type dictionary corresponding to the discrete cosine transform (DCT).   These empirical observations  raise  interesting theoretical questions which  we briefly expose  and can be exploited in future works.
    
    \subsection{Outline}
    
    The paper is organized as follows: in Section \ref{sec:prelim}, we present the relevant mathematical concepts. We then discuss  our approximation scheme and compare it with the linear and best $M$-term methods in Section \ref{sec:approx}. We  present our main theoretical results in Section \ref{sec:main} and finally, we demonstrate our theoretical results within numerical examples in Section \ref{sec:numerical}. 

\section{Mathematical Preliminaries}\label{sec:prelim}
In this section, we  recall the relevant mathematical concepts and state preparatory results that we will use throughout  the paper.  

\subsection{L\'evy Processes and L\'evy White Noises}\label{prelim:levyprocess}
Brownian motions and compound Poisson processes are members of the general family of L\'evy processes,  which are continuous-domain random processes characterized by their independent and stationary increments \cite{Sato1994levy,bertoin1996levy}.   
L\'evy processes are defined\footnote{This definition,  based on the theory of distributions}, is not the most common one. However, it is proven to be equivalent to more classical ones in \cite{Dalang2015Levy}.  Consequently, we shall view the wavelet coefficients as linear functionals acting on the L\'evy white noises which in effect allows us to simply characterize their probability laws. as the solutions of the stochastic differential equation 
\begin{equation}\label{Eq:Dsw}
\mathrm{D} s = w,
\end{equation}
with the boundary condition $s(0)=0$. In \eqref{Eq:Dsw}, $\mathrm{D}$ denotes the (weak) derivative operator and $w$ is a L\'evy white noise. We choose to only consider zero-mean white noises. However, this comes with no loss of generality as the results are readily extendable to the general case. 

The formal construction of the family of L\'evy white noises as generalized random processes have been exposed in~\cite[Chapter 3]{gelfand_generalized_2014}.
In this framework,   the L\'evy white noises are defined based on their observation through smooth test functions. For each adequate test function $\varphi$, $\langle w , \varphi \rangle$ is then a zero-mean random variable. The collection of random variables $(\langle w, \varphi\rangle)_{\varphi}$ satisfy two important properties:
\begin{itemize}
\item {\bf (Stationarity)} For any test function $\varphi$ and any shift value $t_0\in\mathbb{R}$, the random variables $\langle w, \varphi\rangle$ and $\langle w, \varphi(\cdot-t_0)\rangle$ have the same law. 
\item {\bf (Whiteness)} For any pair of test functions $(\varphi_1,\varphi_2)$ with disjoint support, the random variables  $\langle w, \varphi_1\rangle$ and $\langle w, \varphi_2\rangle$ are independent.
\end{itemize}

The class of valid test functions for a L\'evy white noise have been characterized in~\cite{Rajput1989spectral,fageot2021domain}. It is sufficient for us to know that $\langle w ,\varphi\rangle$ is well defined for any L\'evy white noise $w$ and any square-integrable and compactly supported test function $\varphi$~\cite[Proposition 5.10]{fageot2021domain}. 

The most studied example of L\'evy processes is the Brownian motion, for which $w$ is a Gaussian white noise. In this case, for any $\varphi\in L_2(\mathbb{R})$, the random variable $\langle w,\varphi\rangle$ has a normal distribution with zero mean and variance $\sigma^2 \|\varphi\|_2^2$, where $\sigma^2$ is the variance of the noise \cite[Section 2.5]{gelfand_generalized_2014}. 

Another prominent  subfamily of L\'evy processes are the compound Poisson processes. They are  piecewise constant processes and their statistics is characterized by their probability law of jumps $\mathcal{P}$ and their Poisson parameter $\lambda>0$ that controls the sparsity of the random process (see Figure \ref{fig:CP}).
  \begin{figure}[t]
    \centering
    \includegraphics[ width=0.5\textwidth]{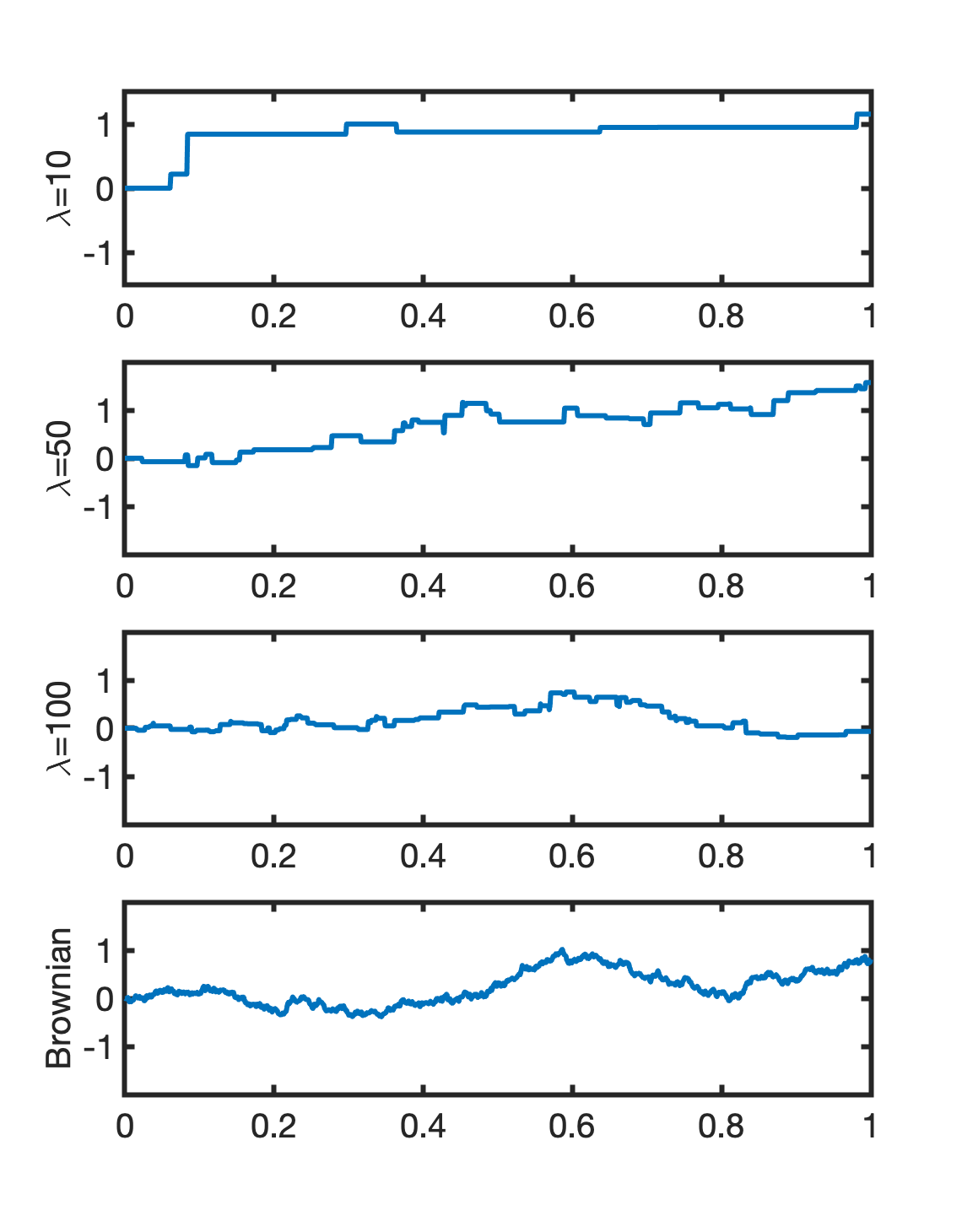}
    \caption{Trajectories of compound Poisson processes with Gaussian jumps  (different values of $\lambda$) and a Brownian motion. All processes are normalized to have unit variance.}
    \label{fig:CP}
\end{figure} 
More precisely, the compound Poisson white noise $w$ with  law of jumps $\mathcal{P}$ and Poisson parameter $\lambda>0$ can be written as a sum of non-uniform Dirac impulses, as 
\begin{equation}\label{Eq:CPNoise}
w= \sum_{k\in\mathbb{Z}} a_k \delta(\cdot-\tau_k),
\end{equation}
where the sequence $\{a_k\}_{k\in\mathbb{Z}}$ of height of Diracs is i.i.d. with  law $\mathcal{P}$ and the sequence $\{\tau_k\}_{k\in\mathbb{Z}}$ of locations of Diracs is a stationary Poisson point-process with parameter $\lambda>0$ (see  \cite{daley2007introduction} for a formal definition of point processes), the $a_k$ and the $\tau_k$ being independent. The key property regarding the Dirac locations is that the number $N$ of $\tau_k$ in any interval $[a,b]$ with $a < b$ is a Poisson random variable with parameter $\lambda (b-a)$. Furthermore, condition to the event $N=n$, the locations  of jumps that are in $[a,b]$ are drawn independently from a  uniform law over $[a,b]$  \cite[Section 2.1]{daley2007introduction}. This implies that,  if we denote by   ${\bf x}=(x_1,\ldots,x_N)$, the ordered set of Dirac locations that are in $[a,b]$, then condition to the event $\{N=n\}$ for any $n\geq 1$, the probability density function (PDF) of ${\bf x}$ is 
\begin{equation}\label{eq:densityt}
p_{\bf x}(\boldsymbol{u}|N=n)= \frac{n!}{(b-a)^n} \mathbbm{1}_{a\leq u_1 \leq\ldots \leq u_n\leq b}
\end{equation} 
for any vector $\boldsymbol{u}=(u_1,\ldots,u_n)\in\mathbb{R}^n$.   It is worth noting that the probability density of $\bf x$, once condition to $N= n \geq 1$, does not depend on $\lambda$ anymore. 
Throughout  the paper, we shall write the ordered jump positions of compound Poisson processes with the letter ${x}$, and the unordered ones with the letter ${\tau}$.

In Lemma \ref{Lemma:Poisson}, we characterize the law of the minimal distance $\Delta$ between two consecutive jumps of a compound Poisson process. The   proof  of Lemma \ref{Lemma:Poisson}  can be found in Appendix \ref{App:LemmaPoisson}.
\begin{Lemma}\label{Lemma:Poisson}
Consider a compound Poisson process $s$ with parameters $(\mathcal{P},\lambda)$ and a fixed  interval $[a,b]$. Denote by $N$, the number of points in $[a,b]$ and  $\mathbf{x}=(x_1,\ldots,x_N)$,  the ordered set of jumps of $s$ that are in $[a,b]$. With the convention  $x_0=a$, we define the random variable $\Delta$ as 
\begin{equation}\label{Eq:DefDelta}
\Delta= \begin{cases} (b-a), & N=0 \\ \underset{1\leq i\leq N}{\min} (x_{i}-x_{i-1}), &N\geq 1. \end{cases}
\end{equation}
Then, almost surely,  $\Delta\leq (b-a)/N$, and  for any $n\geq 1$ and $\delta \in[0,(b-a)/n]$, we have  
\begin{equation}
\mathbb{P}(\Delta \geq \delta|N=n) = \left(1-n\frac{\delta}{b-a}\right)^n.
\end{equation}
\end{Lemma}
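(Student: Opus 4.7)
The plan is to prove the two assertions separately, both by conditioning on $\{N=n\}$ and using the explicit density \eqref{eq:densityt}.

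For the almost-sure bound $\Delta \leq (b-a)/N$, the key observation is that, conditional on $N=n\geq 1$, the $n$ consecutive gaps $x_i - x_{i-1}$ for $i=1,\ldots,n$ are nonnegative and telescope to
\[
\sum_{i=1}^n (x_i - x_{i-1}) = x_n - a \leq b - a.
\]
The minimum of $n$ nonnegative numbers whose sum is at most $b-a$ cannot exceed $(b-a)/n$, which establishes the inequality pathwise on $\{N=n\}$, and hence almost surely.

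For the tail probability, I would compute it as an integral against the conditional density in \eqref{eq:densityt}. Setting $x_0=a$, the event $\{\Delta \geq \delta\}$ conditional on $\{N=n\}$ corresponds to the region
\[
A_\delta = \bigl\{ (u_1,\ldots,u_n) : a \leq u_1, \; u_i - u_{i-1}\geq \delta \text{ for } 1\leq i\leq n, \; u_n \leq b \bigr\},
\]
so that
\[
\mathbb{P}(\Delta \geq \delta \mid N=n) = \frac{n!}{(b-a)^n}\int_{A_\delta} \mathrm{d}u_1\cdots \mathrm{d}u_n.
\]
The natural change of variables is $v_1 = u_1 - a - \delta$ and $v_i = u_i - u_{i-1} - \delta$ for $2\leq i\leq n$. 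This map has unit Jacobian and sends $A_\delta$ bijectively onto the standard simplex
\[
S_n = \Bigl\{ (v_1,\ldots,v_n) : v_i \geq 0, \; \sum_{i=1}^n v_i \leq b-a-n\delta \Bigr\},
\]
which is nonempty exactly when $\delta \leq (b-a)/n$, consistent with the first part. The volume of $S_n$ is the standard value $(b-a-n\delta)^n/n!$, and multiplying by the density constant $n!/(b-a)^n$ yields $\bigl(1 - n\delta/(b-a)\bigr)^n$, as claimed.

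There is no real obstacle here; the computation is routine once the change of variables is set up correctly. The one point requiring a little care is ensuring that the jump $u_1 - x_0 = u_1 - a$ is treated on the same footing as the subsequent gaps $u_i - u_{i-1}$, so that the shifted variables $v_i$ form a genuine simplex rather than a product of half-spaces; this is exactly what the convention $x_0 = a$ in \eqref{Eq:DefDelta} is designed to accommodate.
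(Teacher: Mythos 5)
Your proof is correct and follows essentially the same route as the paper: both arguments condition on $\{N=n\}$, derive the almost-sure bound by telescoping the gaps, and compute the tail probability by changing to gap coordinates and evaluating the volume of a scaled simplex. The only cosmetic difference is that you invoke the simplex volume $(b-a-n\delta)^n/n!$ directly, whereas the paper first shifts, symmetrizes the integral over all orthants, and cites the volume $2^n/n!$ of the $\ell_1$ unit ball --- the two computations are interchangeable.
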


 In this paper, we shall consider compound Poisson processes and white noises that are zero-mean, finite variance (which is equivalent to say that the jumps themselves are zero-mean with finite variance), and whose probability law of jumps $\mathcal{P}$ has a PDF (in particular, it has no atoms, what will be used in our analysis). The prototypical example is a compound Poisson process with Gaussian jump heights. 
 
 Despite the fact that their sample paths have very distinct behaviors (see Figure \ref{fig:CP}), finite-variance compound Poisson processes have the same second-order statistics as the Brownian motion. Indeed, for any test function $\varphi$, the random variable $\langle w , \varphi \rangle$ has zero-mean and variance $\sigma^2 \lVert \varphi \rVert_2^2$ for any L\'evy white noise with   variance $\sigma^2$ and zero mean \cite[Proposition 4.15]{Unser2014sparse}.

\subsection{Haar Wavelets} 
 For a pair of functions $\psi,\phi \in L_2(\mathbb{R})$, that are referred to as the mother and father  wavelets, respectively,  the wavelet family $\boldsymbol{{\Psi}}$ contains all (normalized) dyadic scales and integer shifts of $\psi$ plus the integer shifts of $\phi$.  In other words,  we have that $\boldsymbol{{\Psi}}=\{\psi_{j,k}\}_{j\geq 0, k\in\mathbb{Z}} \cup \{\phi_k\}_{k\in\mathbb{Z}}$, where
\begin{align}
&\psi_{j,k}= 2^{\frac{j}{2}} \psi(2^j \cdot- k),
&\phi_k=\phi(\cdot-k),
\end{align}
for all scaling factor $j\geq 0$ and all shifting parameter $k\in\mathbb{Z}$. 

We consider the family of Haar wavelets 
whose mother and father wavelets are respectively
\begin{align}\label{Eq:HaarWavelet}
\psi= \mathbbm{1}_{[0,\frac{1}{2}]}-\mathbbm{1}_{[\frac{1}{2},1]} \quad \text{and} \quad  \phi= \mathbbm{1}_{[0,1]}.
\end{align}
Haar wavelets are known to form an orthonormal basis for $L_2(\mathbb{R})$ \cite{daubechies1988orthonormal}. This means that any function $f\in L_2(\mathbb{R})$   admits the unique expansion 
\begin{equation}\label{Eq:WaveletExpansion}
f(\cdot)=\sum_{j\geq 0} \sum_{k\in\mathbb{Z}}  \langle f,\psi_{j,k}\rangle \psi_{j,k}(\cdot) + \sum_{k\in\mathbb{Z}} \langle f, \phi_k\rangle \phi_k(\cdot),
\end{equation}
where $\langle \cdot,\cdot \rangle$ denotes the standard inner product in $L_2(\mathbb{R})$, defined as $\langle \varphi_1,\varphi_2 \rangle = \int_{\mathbb{R}} \varphi_1(x) \varphi_2(x) \mathrm{d}x$. 

  The  simple characteristics and  implementation of Haar wavelets make them favorable in practice \cite{porwik2004haar,stankovic2003haar}.  They are also compactly supported, which is of great importance in our analysis, due to the {\it whiteness} property of L\'evy white noises (see above). Last but not least, the family consists of piecewise constant functions. Hence, it is  natural  to represent compound Poisson processes (that are themselves almost surely piecewise constant) in this basis. 
 
    \subsection{Haar Decomposition of L\'evy Processes}

In the sequel, we restrict both the random processes and the wavelet transforms to $[0,1]$ and study  the local compressibility  of compound Poisson processes over this compact interval.

Due to the support localization of the Haar wavelets, we readily see that the family $\boldsymbol{{\Psi}}=\{\psi_{j,k}\}_{j\geq 0, 0 \leq k \leq 2^j-1} \cup \{\phi\}$ forms an orthonormal basis of $L_2([0,1])$, hence the L\'evy process $s$ can be almost surely written as
\begin{equation}\label{Eq:WaveletExpansionPoisson}
s= \langle s, \phi\rangle \phi + \sum_{j\geq 0} \sum_{k=0}^{2^j-1} \langle s,\psi_{j,k}\rangle \psi_{j,k}.
\end{equation}
The  probability  law of the Haar wavelet coefficients of $s$ has been characterized in~\cite{Fageot2015statistics}, where their characteristic functions have been explicitly computed. Here, we study the law of wavelet coefficients using the properties of  the underlying L\'evy white noise. In order to achieve this goal, we introduce the auxiliary functions  defined for $t \in [0,1]$ as
 \begin{align}\label{Eq:phitield}
    \tilde{\phi} (t) &= (1 - t) \mathbbm{1}_{[0,1]} (t), \quad \text{and}  \\
 	\tilde{\psi}_{j,k} (t) &=  \begin{cases} 2^{j/2} ( k/2^j-t ), &   t \in [\frac{k}{2^j}, \frac{k+1/2}{2^j}) \\
						2^{j / 2} (t- (k+1)/2^j), & t \in [\frac{k+1/2}{2^j}, \frac{k+1}{2^j}) \\
						0, & \mbox{otherwise,}
					\end{cases} \label{Eq:psitield}	
\end{align}
for any $j\geq 0$ and $k=0,\ldots,2^j-1$. We conclude this part with Proposition \ref{prop:HaarTield}, that expresses the Haar wavelet coefficients of $s$ using the underlying L\'evy white noise and the auxiliary functions  \eqref{Eq:phitield} and \eqref{Eq:psitield}. The proof is available in Appendix \ref{App:HaarTield}.
\begin{Proposition}\label{prop:HaarTield}
Let $s$ be a L\'evy process. Then, for any $j\geq 0$ and $0 \leq k \leq 2^j -1$, we have
\begin{equation}\label{Eq:InnovWavelet}
\langle s, \psi_{j,k} \rangle = \langle w, \tilde{\psi}_{j,k}\rangle, \qquad \langle s, \phi \rangle = \langle w, \tilde{\phi}\rangle,
\end{equation}
where $w$ is the L\'evy white noise such that $\mathrm{D}s=w$.
\end{Proposition}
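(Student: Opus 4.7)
The plan is to exploit the integral representation of the L\'evy process: by the equivalence of definitions recalled in the footnote, $\mathrm{D}s=w$ with $s(0)=0$ yields the pointwise almost sure identity $s(t)=\langle w, \mathbbm{1}_{[0,t]}\rangle$ for each $t\in[0,1]$. Substituting this into $\langle s,\varphi\rangle = \int_0^1 s(t)\varphi(t)\,\mathrm{d}t$ and exchanging the deterministic integral with the white-noise pairing would yield the master identity
\begin{equation*}
\langle s,\varphi\rangle \;=\; \Bigl\langle w,\; u \mapsto \mathbbm{1}_{[0,1]}(u)\!\int_u^1 \varphi(t)\,\mathrm{d}t\Bigr\rangle,
\end{equation*}
valid for any $\varphi\in L_2([0,1])$. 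Specializing to $\varphi=\phi=\mathbbm{1}_{[0,1]}$ gives $u\mapsto (1-u)\mathbbm{1}_{[0,1]}(u)=\tilde\phi(u)$, and specializing to $\varphi=\psi_{j,k}$, a short piecewise-constant integration on the two halves of its dyadic support reproduces exactly the triangular function $\tilde\psi_{j,k}$ of \eqref{Eq:psitield}. Thus both identities of Proposition~\ref{prop:HaarTield} drop out of this single master formula.

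The delicate step is justifying the Fubini-type exchange, because $w$ is a generalized random process rather than a pointwise-defined random function. My plan is a two-step approximation. First, I would verify the master identity on indicator test functions $\varphi=\mathbbm{1}_{[a,b]}$ with $[a,b]\subseteq[0,1]$; there both sides are ordinary random variables and the identity collapses to $\int_a^b s(t)\,\mathrm{d}t = \bigl\langle w,\, (b-\max(\cdot,a))_+\mathbbm{1}_{[0,b]}\bigr\rangle$, which follows from the pointwise representation of $s$ together with the classical Fubini theorem applied to the purely deterministic integrand $\mathbbm{1}_{u\leq t}\,\mathbbm{1}_{[a,b]}(t)$. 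Linearity extends this to finite step functions.

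Finally, I would pass from step functions to a general $\varphi\in L_2([0,1])$ by a density argument combining three continuity facts: $\varphi\mapsto\langle s,\varphi\rangle$ is almost surely continuous on $L_2([0,1])$ since $s\in L_2([0,1])$ a.s.; the operator $\varphi\mapsto \mathbbm{1}_{[0,1]}\int_\cdot^1\varphi$ is bounded on $L_2([0,1])$; and $\langle w,\cdot\rangle$ depends continuously in probability on its test function, as recalled from \cite[Proposition~5.10]{fageot2021domain}. Applying the limit at $\varphi=\psi_{j,k}$ and $\varphi=\phi$ delivers the two identities of~\eqref{Eq:InnovWavelet}. An alternative route via distributional integration by parts is available but less clean, since $\tilde\phi$ has a jump at $0$ and one would need to absorb a Dirac contribution $\langle s,\delta_0\rangle=s(0)=0$; the Fubini/density argument above sidesteps this subtlety.
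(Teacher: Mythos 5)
Your proof is correct, but it takes a genuinely different route from the paper's. The paper disposes of the statement in three lines by distributional integration by parts: it checks that $-\mathrm{D}\tilde{\psi}_{j,k}=\psi_{j,k}$ and $\mathrm{D}\tilde{\phi}=\delta-\phi$, then uses $\mathrm{D}^*=-\mathrm{D}$ together with $\mathrm{D}s=w$ and $\langle s,\delta\rangle=s(0)=0$ to get $\langle s,\psi_{j,k}\rangle=\langle \mathrm{D}s,\tilde{\psi}_{j,k}\rangle=\langle w,\tilde{\psi}_{j,k}\rangle$ and likewise for $\phi$ --- precisely the ``alternative route'' you set aside at the end, and the Dirac contribution you worried about is absorbed exactly by the boundary condition $s(0)=0$, so it causes no real difficulty. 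Your route instead starts from the integral representation $s(t)=\langle w,\mathbbm{1}_{[0,t]}\rangle$ and proves the master identity $\langle s,\varphi\rangle=\langle w,\,u\mapsto\mathbbm{1}_{[0,1]}(u)\int_u^1\varphi\rangle$ for all $\varphi\in L_2([0,1])$, which correctly specializes to both claims (I checked the piecewise integration reproducing $\tilde{\psi}_{j,k}$). What your approach buys is generality --- a formula for arbitrary test functions, not just the Haar atoms --- at the cost of a stochastic-Fubini justification that the paper never needs. One step of yours is stated too glibly: for indicator test functions you invoke ``the classical Fubini theorem applied to the purely deterministic integrand,'' but since $w$ is a generalized process there is no literal double integral $\int\!\!\int w(u)\mathbbm{1}_{u\leq t}\varphi(t)\,\mathrm{d}u\,\mathrm{d}t$ to which Fubini applies; the honest argument is to approximate $\int_a^b s(t)\,\mathrm{d}t$ by Riemann sums, pull the finite sum inside $\langle w,\cdot\rangle$ by linearity, and pass to the limit using the $L_2$-continuity in probability of $\varphi\mapsto\langle w,\varphi\rangle$ --- i.e., the same continuity facts you already marshal for the density step. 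With that repair your argument is complete, but the paper's adjoint computation is considerably shorter and avoids the limiting argument altogether.
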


\section{Wavelet-Based Approximation Schemes}\label{sec:approx}

In this section, we consider three different approximation schemes for square-integrable functions over $[0,1]$:  the linear, best $M$-term, and greedy approximations. Our main goal and contribution is to precisely quantify the approximation power of the greedy scheme.

    \subsection{Wavelet-Based Approximation Schemes}
    
In what follows, we consider the natural indexing of  wavelets by defining the indexing function ${\rm Ind}: \boldsymbol{\Psi}\rightarrow \mathbb{N}$ as 
\begin{equation} \label{eq:ind}
{\rm Ind}(\phi) = 0, \quad {\rm Ind}(\psi_{j,k}) = 2^j + k,  
\end{equation}
for all $j\geq 0$ and $k=0,\ldots,2^j-1$. 

 \begin{Definition}\label{Def:ApproximationSchemes}
Let $f \in L_2([0,1])$. We denote by
\begin{itemize}
\item $\mathrm{P}_{M}^{\rm lin}(f)$, the  {\bf linear} approximation of $f$, that is obtained by keeping the {\bf first} $M$ wavelet coefficients (with respect to the indexing function ${\rm Ind}$) of $f$ in the expansion \eqref{Eq:WaveletExpansionPoisson}. 
\item  $\mathrm{P}_M^{\rm best}(f)$, the {\bf best} $M$-term approximation of $f$, that is obtained by keeping the $M$ {\bf largest} wavelet coefficients of $f$.
\end{itemize}
\end{Definition} 

 The first scheme in Definition \ref{Def:ApproximationSchemes} is called linear due to the fact that $\mathrm{P}_M^{\rm lin}(f)$ depends linearly on $f$. However, the best $M$-term approximation is adaptive to the signal and  is  therefore nonlinear. One can hope that the adaptiveness of the best $M$-term approximation significantly improve the quality of the approximation when compared with the linear one, what appears to be the case for some classes of functions~\cite{Devore1998Nterm}.
 
    As an alternative approach, we consider a compression scheme for compound Poisson processes that can be performed in an online fashion with respect to the stream of the wavelet coefficients. The main idea is to exploit the tremendous sparsity of the expansion of compound Poisson processes over the Haar wavelet basis, that is done by retaining only the nonzero wavelet coefficients and is called the \emph{greedy approximation}.

\begin{Definition}\label{Def:ApproximationSchemesparse}
Let $f \in L_2([0,1])$. We denote by
  $\mathrm{P}_M^{\rm greedy}(f)$, the  {\bf greedy} approximation of $f$, where only the $M$ {\bf first  nonzero} wavelet coefficients are preserved (the ordering being understood with respect to the indexing function ${\rm Ind}$ in \eqref{eq:ind}).
\end{Definition}

As for the best $M$-term approximation, the greedy approximation of $f$ is nonlinear with respect to $f$.   However, it is greedy in the sense that it can be computed by simply looking at the ordered wavelets coefficients. Hence, it does not necessitate to observe the complete set of wavelet coefficients, contrary to the best $M$-term approximation. It therefore shares the simplicity of the linear scheme and the adaptiveness of the optimal scheme (the best $M$-term).

The three approximation schemes introduced in this section   clearly satisfy the relations
\begin{equation} \label{eq:bounds}
    \lVert f - \mathrm{P}_M^{\rm best}(f) \rVert_2 
    \leq 
    \lVert f - \mathrm{P}_M^{\rm greedy}(f) \rVert_2 
    \leq
    \lVert f - \mathrm{P}_M^{\rm lin}(f) \rVert_2 
\end{equation}
for any function $f \in L_2([0,1])$ and any $M \geq 0$.

    \subsection{Mean-Squared Error of the Wavelet Approximations}

 Let $s$ be a L\'evy process.
To quantify the performance of an approximation scheme, we consider the mean-squared error (MSE), which we denote by ${\rm MSE}_{M}^{\rm method}$ for the approximation scheme ${\rm method}\in \{ \rm lin,greedy,best\}$ and is   defined as 
\begin{equation} \label{eq:MSEmethod}
{\rm MSE}_{M}^{\mathrm{method}} = \mathbb{E}\left[\|s-\mathrm{P}_M^{\rm method}(s) \|_{L_2}^2\right].
\end{equation}
It is clear from \eqref{eq:bounds}  that 
\begin{equation}\label{Eq:BleNleL}
{\rm MSE}_{M}^{\mathrm{best}} \leq {\rm MSE}_{M}^{\mathrm{greedy}}  \leq {\rm MSE}_{M}^{\mathrm{lin}}. 
\end{equation}

    \subsection{The Linear Scheme}

In Proposition \ref{Prop:LinError}, we determine the ${\rm MSE}_{M}^{\mathrm{lin}}$ of any L\'evy process that has  finite variance. Its proof is available in Appendix \ref{App:linerr}.
\begin{Proposition}\label{Prop:LinError}
Let $s$ be a L\'evy process with finite variance $\sigma_0^2$. Then, for every $M \geq 1$, we have 
\begin{equation}\label{Eq:linerrbis}
{\rm MSE}_{M}^{\mathrm{lin}}= \frac{\sigma_0^2}{12} \frac{1}{2^J} \left(2 - \frac{m}{2^J}\right),
\end{equation}
where $J=\lfloor\log_2 M\rfloor$ and $m = M-2^J \in \{0,\ldots,2^J-1\}$. In particular, for every $M \in 2^\mathbb{N}$, we have that
\begin{equation}\label{Eq:linerr}
{\rm MSE}_{M}^{\mathrm{lin}}= \frac{\sigma_0^2}{6M}.
\end{equation}
\end{Proposition}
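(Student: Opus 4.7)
The plan is to exploit the orthonormality of the Haar basis together with Proposition \ref{prop:HaarTield}, which lets us transfer the computation from second moments of wavelet coefficients of $s$ to $L_2$-norms of the auxiliary functions $\tilde{\psi}_{j,k}$.

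First, since $\boldsymbol{\Psi}$ is an orthonormal basis of $L_2([0,1])$ and $\mathrm{P}_M^{\rm lin}(s)$ retains exactly the wavelets with indexing $< M$, Parseval's identity gives
\begin{equation*}
\|s - \mathrm{P}_M^{\rm lin}(s)\|_2^2 = \sum_{(j,k):\,{\rm Ind}(\psi_{j,k}) \geq M} \langle s, \psi_{j,k}\rangle^2.
\end{equation*}
Taking expectations and applying Proposition \ref{prop:HaarTield}, we obtain
\begin{equation*}
{\rm MSE}_{M}^{\rm lin} = \sum_{(j,k):\,{\rm Ind}(\psi_{j,k}) \geq M} \mathbb{E}\bigl[\langle w, \tilde{\psi}_{j,k}\rangle^2\bigr].
\end{equation*}
Since $w$ has zero mean and variance $\sigma_0^2$, the second-moment identity for L\'evy white noises recalled at the end of Section \ref{prelim:levyprocess} yields $\mathbb{E}[\langle w,\tilde{\psi}_{j,k}\rangle^2] = \sigma_0^2 \|\tilde{\psi}_{j,k}\|_2^2$ for every compactly supported square-integrable $\tilde\psi_{j,k}$.

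Next, I would compute $\|\tilde{\psi}_{j,k}\|_2^2$ explicitly from \eqref{Eq:psitield}. On each of the two half-subintervals, $|\tilde{\psi}_{j,k}|^2$ is a parabola vanishing at the corresponding endpoint; a direct change of variables gives
\begin{equation*}
\|\tilde{\psi}_{j,k}\|_2^2 = 2\int_0^{2^{-(j+1)}} 2^j u^2 \,\mathrm{d}u = \frac{1}{12\cdot 4^j},
\end{equation*}
independently of $k$. By the indexing \eqref{eq:ind}, the wavelets with ${\rm Ind}(\psi_{j,k}) \geq M = 2^J + m$ are those with (i) $j = J$ and $m \leq k \leq 2^J - 1$, or (ii) $j \geq J+1$ and $0 \leq k \leq 2^j - 1$. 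Summing the contributions gives
\begin{equation*}
{\rm MSE}_{M}^{\rm lin} = \frac{\sigma_0^2}{12}\left[\frac{2^J - m}{4^J} + \sum_{j = J+1}^{\infty} \frac{2^j}{4^j}\right] = \frac{\sigma_0^2}{12}\cdot\frac{1}{2^J}\left(2 - \frac{m}{2^J}\right),
\end{equation*}
where the geometric sum $\sum_{j \geq J+1} 2^{-j} = 2^{-J}$ is used in the last step. The special case $M = 2^J$ is then obtained by setting $m = 0$.

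There is essentially no conceptual obstacle here; the main technical step is identifying the correct set of wavelet indices past $M$ under the indexing \eqref{eq:ind}, namely separating the partially included scale $j = J$ from the fully excluded scales $j \geq J+1$, and then carefully bookkeeping the resulting finite and geometric sums.
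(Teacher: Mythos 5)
Your proposal is correct and follows essentially the same route as the paper's proof: Parseval's identity for the Haar basis, Proposition \ref{prop:HaarTield} to pass to the white noise, the identity $\mathbb{E}[\langle w,\tilde{\psi}_{j,k}\rangle^2]=\sigma_0^2\|\tilde{\psi}_{j,k}\|_2^2$, the computation $\|\tilde{\psi}_{j,k}\|_2^2 = 2^{-2j}/12$, and the same split of the tail sum into the partially retained scale $J$ and the fully discarded scales $j\geq J+1$. Your bookkeeping of the indices and the final geometric sum are both correct (and in fact your final display matches the proposition's statement, whereas the paper's own last chain of equalities carries a spurious factor of $M$ in the denominator that is evidently a typo).
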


Proposition \ref{Prop:LinError} shows that the linear approximations of L\'evy processes with finite variance   share the same mean-square error.
Let us also remark that if  $s$ is a Brownian motion, then the random variables  $X_{j,k} = \langle s,\psi_{j,k} \rangle=\langle w,\tilde{\psi}_{j,k} \rangle $   are all Gaussian.
Hence, 
$$
\mathbb{P}(\exists j,k: X_{j,k}=0) \leq \sum_{j\geq 0} \sum_{k=0}^{2^j-1} \mathbb{P}(X_{j,k}=0) =0, 
$$
and all the countably many wavelet coefficients are almost surely nonzero and hence, the linear and greedy schemes coincide, as stated in Corollary \ref{corol:LinSparse}. 
  \begin{Corollary}\label{corol:LinSparse}
Let $s$  be a Brownian motion.  Then, for any  $M \geq 0$, we have  the almost sure relation 
\begin{equation}
 {\rm MSE}_{M}^{\mathrm{greedy}} ={\rm MSE}_{M}^{\mathrm{lin}}.
\end{equation}
\end{Corollary}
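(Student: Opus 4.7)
The plan is to formalize the one-line argument that is already sketched in the paragraph preceding the corollary: when $s$ is a Brownian motion, every Haar wavelet coefficient of $s$ is almost surely nonzero, so keeping the first $M$ coefficients (linear scheme) and keeping the first $M$ nonzero coefficients (greedy scheme) produces exactly the same partial sum almost surely.

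First I would fix $j \geq 0$ and $0 \leq k \leq 2^j - 1$ and use Proposition \ref{prop:HaarTield} to write $X_{j,k} = \langle s,\psi_{j,k}\rangle = \langle w,\tilde{\psi}_{j,k}\rangle$. Since $w$ is a Gaussian white noise with variance $\sigma_0^2$ and $\tilde{\psi}_{j,k}$ is a (bounded, compactly supported) square-integrable test function with $\|\tilde{\psi}_{j,k}\|_2 > 0$, the random variable $X_{j,k}$ is centered Gaussian with variance $\sigma_0^2 \|\tilde{\psi}_{j,k}\|_2^2 > 0$. A Gaussian with positive variance has no atom at the origin, hence $\mathbb{P}(X_{j,k} = 0) = 0$. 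The same argument (or a trivial direct one) applies to $\langle s,\phi\rangle$.

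Next I would take a union bound over the countable index set $\{\phi\} \cup \{\psi_{j,k} : j \geq 0, 0 \leq k \leq 2^j - 1\}$, exactly as displayed in the paragraph preceding the statement, to get
\begin{equation*}
\mathbb{P}\bigl(\exists j,k : X_{j,k} = 0\bigr) \leq \sum_{j \geq 0} \sum_{k=0}^{2^j-1} \mathbb{P}(X_{j,k} = 0) = 0.
\end{equation*}
Call $\Omega_0$ the almost sure event on which every Haar wavelet coefficient of $s$ is nonzero. On $\Omega_0$, the ordered list of nonzero coefficients (used by $\mathrm{P}_M^{\mathrm{greedy}}$) coincides term-by-term with the ordered list of all coefficients (used by $\mathrm{P}_M^{\mathrm{lin}}$) under the indexing \eqref{eq:ind}, so $\mathrm{P}_M^{\mathrm{greedy}}(s) = \mathrm{P}_M^{\mathrm{lin}}(s)$ pointwise. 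Taking squared $L_2([0,1])$ norms and then expectations yields the stated equality ${\rm MSE}_M^{\mathrm{greedy}} = {\rm MSE}_M^{\mathrm{lin}}$.

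There is really no serious obstacle here: the only thing to check with a little care is that the auxiliary functions $\tilde{\psi}_{j,k}$ defined in \eqref{Eq:psitield} are not identically zero (which is immediate from their piecewise-linear ``tent'' shape on $[k/2^j,(k+1)/2^j]$), so that each wavelet coefficient is a genuinely nondegenerate Gaussian. Everything else is just a union bound plus the definitions of the two schemes.
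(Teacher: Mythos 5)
Your proof is correct and follows essentially the same route as the paper, whose entire argument is the paragraph preceding the corollary: each Haar coefficient of the Brownian motion is a nondegenerate Gaussian, a union bound over the countable index set shows all coefficients are almost surely nonzero, and hence the greedy and linear schemes select the same coefficients. Your additional care in verifying $\|\tilde{\psi}_{j,k}\|_2 > 0$ and in passing from the almost sure equality of $\mathrm{P}_M^{\mathrm{greedy}}(s)$ and $\mathrm{P}_M^{\mathrm{lin}}(s)$ to the equality of the expectations is a faithful formalization of what the paper leaves implicit.
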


    \subsection{The Greedy Approximation of Compound Poisson Processes}   

When the wavelet coefficients are sparse ({\it i.e.} when at each scale, only a few of them are nonzero),  the linear and greedy approximation schemes are no longer identical. 
In Proposition \ref{Proposition:Wavelet}, we study the sparsity of the wavelet coefficients of compound Poisson processes.  Precisely, we first characterize when a specific wavelet coefficient vanishes, depending on the presence of jumps. Using this primary result,  we provide upper and lower bounds for the minimal (random) scale at which at least $M$ wavelet coefficients are nonzero. The proof of Proposition \ref{Proposition:Wavelet} is provided in Appendix \ref{App:PropositionWavelet}.  

\begin{Proposition}\label{Proposition:Wavelet}
 Let $s$ be a compound Poisson process  
      whose law of jumps admits a PDF with zero-mean and finite variance. 
\begin{enumerate}
\item For all $j\geq 0$ and $k=0,\ldots,2^j-1$, denote $K_{j,k}$ as the random number of jumps of $s$ in the support of $\psi_{j,k}$. Then,  we almost surely have
\begin{equation}\label{Eq:14}
\langle s, \psi_{j,k} \rangle = 0 \quad \Leftrightarrow \quad K_{j,k} =0. 
\end{equation}
In other words, the symmetric difference between  the events $\langle s, \psi_{j,k} \rangle = 0$ and $K_{j,k} =0$ has probability zero. 
\item Consider the wavelet expansion \eqref{Eq:WaveletExpansionPoisson} of $s$ and denote by $N_J$,  the random number of nonzero wavelet coefficients with scale no larger than $J$.  Furthermore, condition to $\{N\geq 1\}$,  let $J_M$ be the smallest random value of $J$ such that $N_{J}\geq M$; that is, $J_M$ is characterized by $N_{J_{M}-1}<M\leq N_{J_M}$. Then, we have
\begin{align}\label{Eq:JmBound}
\left\lceil \frac{M-2}{N}\right\rceil \leq J_M \leq \left\lfloor \frac{M-1}{N}   +\log \Delta^{-1}\right\rfloor,
\end{align}
where the random variable $\Delta$ is defined in \eqref{Eq:DefDelta}.
\end{enumerate}
\end{Proposition}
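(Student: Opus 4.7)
My plan is to attack the two parts by combining Proposition~\ref{prop:HaarTield} with the geometric/combinatorial structure of the Poisson jump set. For \emph{Part~1}, I would handle the two directions separately. The direction $K_{j,k}=0 \Rightarrow \langle s,\psi_{j,k}\rangle=0$ follows from the piecewise-constant nature of $s$: if no jump of $s$ lies in the support of $\psi_{j,k}$, then $s$ is constant on that support and the wavelet coefficient vanishes because $\int \psi_{j,k} = 0$. For the converse, I use Proposition~\ref{prop:HaarTield} and the Dirac decomposition \eqref{Eq:CPNoise} of $w$ to write $\langle s,\psi_{j,k}\rangle = \sum_{i:\tau_i \in \mathrm{supp}(\tilde{\psi}_{j,k})} a_i\, \tilde{\psi}_{j,k}(\tau_i)$. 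Conditioning on the jump positions, the zero-set of $\tilde{\psi}_{j,k}$ has Lebesgue measure zero, so almost surely every weight $\tilde{\psi}_{j,k}(\tau_i)$ is nonzero; since the heights $a_i$ are i.i.d.\ with an absolutely continuous law $\mathcal{P}$, any nondegenerate linear combination of them has an absolutely continuous distribution and is therefore almost surely nonzero.

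For \emph{Part~2}, Part~1 lets me almost-surely identify $N_J = 1 + \sum_{j=0}^J m_j$, where $m_j$ counts the dyadic subintervals of $[0,1]$ of length $2^{-j}$ that contain at least one jump of $s$, and the leading~$1$ corresponds to the almost-surely nonzero scaling coefficient $\langle s,\phi\rangle$. Both bounds on $J_M$ then reduce to estimates on $m_j$. For the lower bound, the elementary inequality $m_j \leq \min(N,2^j)$ (each interval absorbs at most one ``contribution'' and only $2^j$ intervals exist) gives $m_0 \leq 1$ and $m_j \leq N$ for $j \geq 1$, so $N_J \leq NJ + 2$; enforcing $N_J \geq M$ then yields the integer inequality $J_M \geq \lceil (M-2)/N \rceil$.

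For the upper bound, I introduce the critical scale $j^\ast = \lceil \log_2 \Delta^{-1} \rceil$, for which $2^{-j^\ast} \leq \Delta$. At any $j \geq j^\ast$, two consecutive jumps (separated by at least $\Delta \geq 2^{-j}$) cannot lie in a common dyadic interval of length $2^{-j}$, forcing $m_j = N$. Combined with the trivial bound $m_j \geq 1$ for $j < j^\ast$ (since $N \geq 1$), this gives
\[
N_J \;\geq\; 1 + j^\ast + N(J - j^\ast + 1) \qquad \text{for every } J \geq j^\ast.
\]
Enforcing $N_J \geq M$ and isolating $J$ produces a real-valued inequality of the form $J \geq (M-1)/N + (N-1)j^\ast/N - 1$, which, after a short arithmetic step using $j^\ast - 1 \leq \log_2 \Delta^{-1}$ together with the constraint $\Delta \leq 1/N$ from Lemma~\ref{Lemma:Poisson} to absorb fractional parts, collapses to the claimed integer bound $J_M \leq \lfloor (M-1)/N + \log_2 \Delta^{-1} \rfloor$.

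The two geometric ingredients (piecewise constancy for Part~1, jump separation at scale $j^\ast$ for Part~2) are clean; the main obstacle I anticipate is this final rounding step in the upper bound. Controlling $\lceil (M-1)/N + (N-1)j^\ast/N - 1 \rceil$ against $\lfloor (M-1)/N + \log_2 \Delta^{-1} \rfloor$ requires careful bookkeeping in the fractional parts of $(M-1)/N$ and $\log_2 \Delta^{-1}$, and it is precisely for this reason that the a~priori constraint $\log_2 \Delta^{-1} \geq \log_2 N$ afforded by Lemma~\ref{Lemma:Poisson} is essential.
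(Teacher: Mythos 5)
Your Part~1 and the lower bound of Part~2 are essentially the paper's own arguments: piecewise constancy plus $\int\psi_{j,k}=0$ for one direction, absolute continuity of $\sum_i a_i\tilde{\psi}_{j,k}(\tau_i)$ for the other, and the count $N_J\leq NJ+2$ for $J_M\geq\lceil(M-2)/N\rceil$. These are fine.

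The upper bound of Part~2, however, has a genuine gap, and it is exactly at the spot you flag: the rounding step cannot be completed, because your intermediate lower bound on $N_J$ is too weak, not because of fractional-part bookkeeping. Using only $m_j\geq 1$ for $j<j^\ast$ and $m_j=N$ for $j\geq j^\ast$ gives $N_J\geq 1+j^\ast+N(J-j^\ast+1)$ and hence $J_M\leq j^\ast+(M-2-j^\ast)/N$, which exceeds the target $(M-1)/N+\log_2\Delta^{-1}$ by roughly $j^\ast(N-1)/N-\log_2\Delta^{-1}$; this excess is positive whenever $N>\log_2\Delta^{-1}+2$, which happens for typical configurations (recall only $\log_2\Delta^{-1}\geq\log_2 N$ is guaranteed). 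Concretely, take $N=8$ jumps at $i/8-\eta$, $i=1,\dots,8$, with $\eta>0$ tiny, so $\Delta=1/8-\eta$, $\log_2\Delta^{-1}=3+\epsilon'$, $j^\ast=4$, and $m_0=1$, $m_1=2$, $m_2=4$, $m_j=8$ for $j\geq 3$. For $M=6$ the claimed bound is $\lfloor 5/8+3+\epsilon'\rfloor=3$ (and indeed $J_M=2$), but your inequality only certifies $J_M\leq 4$; no manipulation of ceilings and floors turns $4$ into $3$. The missing idea is the interpolating packing estimate used in the paper: a dyadic interval of length $2^{-j}$ contains at most about $2^{-j}\Delta^{-1}$ jumps, so $m_j\geq N\min(1,2^j\Delta)$ at \emph{every} scale. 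Summing this over $j\leq J$ produces a geometric series $N\Delta(2^{J+1}-2)$ for the sub-$j^\ast$ scales, which is then linearized via $2^x\geq x$ with $x=J+1+\log_2\Delta$ to obtain $N_J\geq N(J+1+\log_2\Delta)$ for all $J\geq 0$; applying this at $J=J_M-1$ with $N_{J_M-1}\leq M-1$ yields the stated upper bound. Your coarse two-regime split discards precisely the $N2^j\Delta$ occupied intervals at scales below $j^\ast$ that make the bound close.
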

    
\section{Compressibility of Compound Poisson Processes}\label{sec:main}
In this section, we present our main result on characterizing the asymptotic behavior of the  greedy approximation of compound Poisson processes. 

\begin{Theorem}\label{Thm:Main}
Let $s$ be a compound Poisson process with  Poisson parameter $\lambda>0$ whose law of jumps  admits a PDF with zero-mean and finite variance.   
 Then for every $M\in \mathbb{N}$,  we have that
\begin{equation}\label{Eq:MSEbounds}
 C_1 M^{-1} \mathbb{E}[2^{-\frac{M}{N}}] \leq {\rm MSE}_{M}^{\rm greedy} \leq  C_2 M  \mathbb{E}[2^{-\frac{M}{N}}],
\end{equation}
where $N$ is a Poisson random variable with parameter $\lambda$, and  $C_1,C_2>0$ are some constants. 
\end{Theorem}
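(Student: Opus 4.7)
The plan is to prove both inequalities by conditioning on the jump configuration of the compound Poisson process (the count $N$ and the positions $\boldsymbol{\tau}=(\tau_1,\dots,\tau_N)$), so that only the i.i.d.\ zero-mean heights remain random. Proposition \ref{prop:HaarTield} then yields the pointwise identity
\[
\mathbb{E}\!\left[\langle s,\psi_{j,k}\rangle^2 \,\big|\, \boldsymbol{\tau}\right]
= \sigma^2 \sum_{\tau_i\in\mathrm{supp}(\psi_{j,k})} \tilde{\psi}_{j,k}(\tau_i)^2,
\]
which, together with the sup-norm estimate $\|\tilde{\psi}_{j,k}\|_\infty\leq 2^{-j/2}/2$, reduces everything to scale-by-scale computations. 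Each jump $\tau_i$ lies in the support of exactly one wavelet $\psi_{j,k_i(j)}$ at each scale $j$, namely with $k_i(j)=\lfloor \tau_i 2^j\rfloor$.

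For the upper bound, the greedy scheme retains all nonzero wavelet coefficients of scale strictly less than $J_M$, hence
$\|s-\mathrm{P}^{\rm greedy}_M\{s\}\|_2^2 \leq \sum_{j\geq J_M}\sum_k\langle s,\psi_{j,k}\rangle^2$.
Summing the identity above in $k$ at each scale $j$ (only the jumps contribute, and each to a single $k$) and telescoping the geometric series in $j\geq J_M$ gives the conditional estimate $\mathbb{E}[\|s-\mathrm{P}^{\rm greedy}_M\{s\}\|_2^2\mid\boldsymbol\tau]\leq \tfrac12\sigma^2 N\,2^{-J_M}$. The lower bound $J_M\geq \lceil(M-2)/N\rceil$ from Proposition \ref{Proposition:Wavelet}(2) then converts this to ${\rm MSE}_M^{\rm greedy}\leq 2\sigma^2\,\mathbb{E}[N\,2^{-M/N}]$. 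Finally, to pass from $N\,2^{-M/N}$ to $M\,2^{-M/N}$, I would split at $N=M$: on $\{N\leq M\}$ the domination is immediate, and on $\{N>M\}$ the Poisson identity $\mathbb{E}[N\mathbf{1}_{N>M}]=\lambda\,\mathbb{P}(N\geq M)$ combined with $\mathbb{E}[2^{-M/N}]\geq \tfrac12\,\mathbb{P}(N\geq M)$ (valid because $2^{-M/N}\geq 1/2$ whenever $N\geq M$) controls the tail.

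For the lower bound, I would exploit the complementary containment $\|s-\mathrm{P}^{\rm greedy}_M\{s\}\|_2^2\geq \|s-P_{J_M}s\|_2^2$, where $P_J$ denotes the projection onto scales $\leq J$. The key quantity becomes $T_J(\tau):=\sum_{j>J}\tilde\psi_{j,k(j)}(\tau)^2$, which satisfies $\mathbb{E}[T_J(U)]=2^{-J}/12$ for $U$ uniform on $[0,1]$ and \emph{deterministic} $J$. The pointwise lower bound $T_J(\tau)\geq \tilde\psi_{J+1,k(J+1)}(\tau)^2\geq c\,2^{-J}$ holds whenever $\tau$ lies in the central half of its scale-$(J+1)$ dyadic interval. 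Combined with the upper bound $J_M\leq (M-1)/N+\log_2(1/\Delta)$ from Proposition \ref{Proposition:Wavelet}(2) (giving $2^{-J_M}\geq 2\Delta\,2^{-M/N}$) and the moment $\mathbb{E}[N\Delta\mid N=n]=1/(n+1)$, derived from the density implicit in Lemma \ref{Lemma:Poisson}, this yields the intermediate bound ${\rm MSE}_M^{\rm greedy}\geq c'\sigma^2\,\mathbb{E}[\tfrac{1}{N+1}\,2^{-M/N}]$. The final scalar inequality $\mathbb{E}[\tfrac{1}{N+1}\,2^{-M/N}]\geq (C_1/M)\,\mathbb{E}[2^{-M/N}]$ is obtained by splitting $\{N+1\leq M\}$ (where it is pointwise) and $\{N+1>M\}$ (negligible by super-exponential Poisson tail decay).

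The principal obstacle is the lower-bound step: $J_M$ is a nontrivial function of $\boldsymbol\tau$, so $\mathbb{E}[T_{J_M}(\tau_i)]$ cannot be evaluated by freezing $J_M$. The resolution I plan to pursue is to use the monotonicity of $T_J$ in $J$ and replace $J_M$ by its deterministic-in-$(N,\Delta)$ upper bound from Proposition \ref{Proposition:Wavelet}(2); the ``central half'' argument must then be carried out so that the corresponding event is essentially independent of the order statistics of the jumps that determine $\Delta$.
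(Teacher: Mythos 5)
Your upper-bound argument is correct and essentially the paper's: condition on the jump configuration so that $J_M$ is frozen, bound the discarded energy by the full tail $\sum_{j\geq J_M}\sum_k|\langle s,\psi_{j,k}\rangle|^2$, compute the second moments from the i.i.d.\ zero-mean heights, insert $J_M\geq\lceil (M-2)/N\rceil$, and convert $\mathbb{E}[N2^{-M/N}]$ into $M\,\mathbb{E}[2^{-M/N}]$ by splitting at $N=M$. Your tail comparison $\mathbb{E}[2^{-M/N}]\geq\tfrac12\mathbb{P}(N\geq M)$ is a clean alternative to the paper's Chernoff estimate and works.

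The lower bound has a genuine gap, and it is precisely the obstacle you flag in your last paragraph. After replacing $J_M$ by the bound $\lfloor (M-1)/N+\log_2\Delta^{-1}\rfloor$, the cutoff scale is still random through $\Delta$, and your ``central half of the scale-$(\bar{J}+1)$ dyadic cell'' event for a given jump $\tau_i$ is itself defined in terms of $\Delta$, which is a function of \emph{all} the jump positions including $\tau_i$. Conditionally on $N=n$, the positions are exchangeable but not independent of $\Delta$: conditioning on $\Delta$ large repels the jumps from one another and from the left endpoint, so the marginal law of $\tau_i$ given a separation event is no longer uniform, and the factorization you need --- pulling out $\mathbb{E}[N\Delta\mid N=n]=1/(n+1)$ times a constant probability for the central-half event --- is not available, nor ``essentially'' available. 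The paper resolves exactly this by (i) choosing a \emph{deterministic} threshold $\delta(n)=(2n^2-2n+2)^{-1}$ and restricting to $\{\Delta\geq\delta\}$, which makes the cutoff scale $\overline{J}=\lfloor (M-1)/n+\log_2\delta^{-1}\rfloor+1$ deterministic at the price of a factor $\delta\asymp n^{-2}$ (this, multiplied by the $n$ jumps, is what produces the $n^{-1}$ and hence the $M^{-1}$ prefactor you also aim for via $1/(N+1)$); and (ii) engineering the event $E=\{\tilde{\Delta}\geq\delta\}\cap\{\tau_1,\dots,\tau_{n-1}\in[0,1/2-\delta]\}$ so that $\mathbbm{1}_{E}\,\mathbbm{1}_{[1/2,1]}(\tau_n)\leq\mathbbm{1}_{\Delta\geq\delta}$ with $\tau_n$ independent of $E$, then using Lemma \ref{Lemma:Poisson} to get $\mathbb{P}(E\mid N=n)=(1-1/n)^{n-1}\geq e^{-1}$ and collecting the energy only from wavelets supported in $[1/2,1]$. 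Some construction of this kind is indispensable; as written, your chain of inequalities for the lower bound cannot be completed. (Minor side remark: $2^{-(M-1)/N}\geq 2\cdot 2^{-M/N}$ fails for $N\geq2$; the harmless bound $2^{-J_M}\geq\Delta\,2^{-M/N}$ is what you actually get.)
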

The proof can be found in Appendix \ref{App:Main}. Here, we  give a sketch of the proof. 
For an arbitrary fixed integer $n\geq 1$, we work conditionally to $N=n$. From  the definition of $J_M$ (see Proposition \ref{Proposition:Wavelet}), one has that  $N_{J_M-1} \leq M-1$. Hence the $M$th nonzero wavelet coefficient is reached at scale $J_M$, and therefore 
\begin{equation}\label{Eq:JmSpLinBound}
  \|s- \mathrm{P}^{\rm lin}_{2^{J_M+1}}(s) \|_2\leq \|s- \mathrm{P}_{M}^{\rm greedy}(s) \|_2   \leq \|s- \mathrm{P}^{\rm lin}_{2^{J_M}}(s) \|_2, 
\end{equation}
almost surely. From Proposition \ref{Prop:LinError}, we know the exact behavior of the linear approximation error. On the other hand, we have lower and upper-bounds for the quantity $J_M$, thanks to Proposition \ref{Proposition:Wavelet}. The rest of the proof   leverages  these two preliminary results in order to derive the announced bounds. 

Theorem \ref{Thm:Main} provides lower and upper bounds for the greedy approximation error of any finite-variance compound Poisson process.  In Theorem \ref{Thm:SubExpSupPoly}, we use these bounds to deduce sub-exponential super-polynomial behaviors for the greedy approximation error of compound-Poisson processes. 

\begin{Theorem}\label{Thm:SubExpSupPoly}
Let $s$ be a compound Poisson process  
  whose law of jumps admits a PDF  with zero-mean and finite variance. Then the greedy approximation error ${\rm MSE}^{\rm greedy}_{M}$ of $s$ follows a {\it sub-exponential} and {\it super-polynomial} asymptotic behavior. Precisely, for any $k\in \mathbb{N}$, we have that 
\begin{equation}\label{Eq:SuperPoly}
\lim_{M\rightarrow +\infty } M^k {\rm MSE}_M^{\rm greedy} =0,
\end{equation}
and for any $\alpha>0$, 
\begin{equation}\label{Eq:SubExpo}
\lim_{M\rightarrow +\infty } e^{\alpha M} {\rm MSE}_M^{\rm greedy} =+\infty.
\end{equation}
\end{Theorem}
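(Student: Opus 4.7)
The plan is to leverage Theorem~\ref{Thm:Main}, which reduces both asymptotic claims to an analysis of the quantity $q_M := \mathbb{E}\bigl[2^{-M/N}\bigr]$ for $N\sim\mathrm{Pois}(\lambda)$ (with the natural convention that the $\{N=0\}$ summand vanishes, consistent with $s\equiv 0$ whenever there are no jumps). The super-polynomial decay \eqref{Eq:SuperPoly} follows from the upper bound $\mathrm{MSE}_M^{\rm greedy}\leq C_2 M\, q_M$, while the sub-exponential statement \eqref{Eq:SubExpo} follows from the lower bound $\mathrm{MSE}_M^{\rm greedy}\geq C_1 M^{-1}\, q_M$. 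So the whole task is to quantify the decay of $q_M$.

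For \eqref{Eq:SuperPoly}, I would write
$$M^{k+1}\, q_M = \sum_{n=1}^{\infty}\mathbb{P}(N=n)\, M^{k+1}\, 2^{-M/n}$$
and apply dominated convergence in the summation index $n$. For each fixed $n\geq 1$, $M^{k+1} 2^{-M/n}\to 0$ as $M\to\infty$. To obtain a summable dominating function, a one-variable calculus check shows that $M\mapsto M^{k+1} 2^{-M/n}$ attains its global maximum on $[0,\infty)$ at $M^\star = (k+1)n/\ln 2$, yielding a uniform bound of the form $M^{k+1} 2^{-M/n}\leq A_k\, n^{k+1}$ where the constant $A_k$ depends only on $k$. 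Since a Poisson random variable has all moments finite, $\mathbb{E}[N^{k+1}]<\infty$, so the dominating function is summable against the Poisson weights and dominated convergence gives $M^{k+1} q_M\to 0$. Multiplying by the factor $C_2$ from Theorem~\ref{Thm:Main} yields $M^k\,\mathrm{MSE}_M^{\rm greedy}\to 0$.

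For \eqref{Eq:SubExpo}, a one-term lower bound is already enough. Given $\alpha>0$, I would fix an integer $n_\alpha$ with $n_\alpha > (\ln 2)/\alpha$ and discard all terms except $n=n_\alpha$ to obtain
$$q_M \geq \mathbb{P}(N=n_\alpha)\, 2^{-M/n_\alpha} = c_\alpha\, \exp\!\Bigl(-\tfrac{\ln 2}{n_\alpha} M\Bigr),$$
where $c_\alpha = \mathbb{P}(N=n_\alpha) > 0$ because every nonnegative integer has positive probability under $\mathrm{Pois}(\lambda)$. Combining with the lower bound of Theorem~\ref{Thm:Main} gives
$$e^{\alpha M}\, \mathrm{MSE}_M^{\rm greedy} \geq C_1 c_\alpha\, M^{-1}\, \exp\!\Bigl(\bigl(\alpha-\tfrac{\ln 2}{n_\alpha}\bigr) M\Bigr),$$
and the choice of $n_\alpha$ makes the exponent strictly positive, so the right-hand side tends to $+\infty$.

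Structurally there is no hard step once Theorem~\ref{Thm:Main} is in hand; the only point that requires a small computation is the uniform-in-$n$ dominating estimate $M^{k+1}2^{-M/n}\leq A_k n^{k+1}$, which works precisely because the maximizer in $M$ scales linearly in $n$. The whole argument exploits only two qualitative features of the Poisson law: the strict positivity of $\mathbb{P}(N=n)$ for every $n\geq 1$ (used in the lower bound), and the finiteness of all polynomial moments of $N$ (used in the upper bound).
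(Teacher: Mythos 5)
Your argument is correct, and it splits naturally into two halves. The sub-exponentiality half is essentially identical to the paper's: both pick a single integer $n_0$ (your $n_\alpha$) large enough that $\alpha - (\ln 2)/n_0 > 0$, keep only that one term of the expectation, and let the resulting exponential overwhelm the polynomial prefactor $M^{-1}$ from the lower bound of Theorem~\ref{Thm:Main}. The super-polynomiality half, however, takes a genuinely different route. The paper splits $\mathbb{E}[2^{-M/N}]$ into three ranges of $n$: a finite initial block (handled by the worst-case rate $2^{-M/N_0}$), a middle block $N_0 \leq n \leq M-1$ where the eventual bound $\mathbb{P}(N=n)\leq 2^{-n}$ combines with the AM--GM inequality $n + M/n \geq 2\sqrt{M}$ to give a uniform $2^{-2\sqrt{M}}$ decay, and a tail $n\geq M$ controlled by the Chernoff bound \eqref{Eq:PoissonTail}. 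You instead apply dominated convergence directly to the series $\sum_n \mathbb{P}(N=n)\,M^{k+1}2^{-M/n}$, with the dominating function $A_k n^{k+1}$ obtained by maximizing $M\mapsto M^{k+1}2^{-M/n}$ at $M^\star=(k+1)n/\ln 2$; the computation is right and the bound is summable because Poisson has all moments. Your version is shorter and, notably, more general: it only uses finiteness of all polynomial moments of $N$, whereas the paper's middle-block estimate exploits the super-geometric decay of the Poisson probabilities. The paper's decomposition buys an explicit quantitative rate ($M^{k+1}2^{-2\sqrt{M}}$ plus exponentially small corrections) that the soft dominated-convergence argument does not provide, but for the stated limit claims both are fully adequate.
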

The proof of Theorem \ref{Thm:SubExpSupPoly} can be found in Appendix \ref{App:SubExpSupPoly}. 
 An enlightening  consequence of the super-polynomial behavior of the greedy approximation error is that it demonstrates that our provided lower- and upper-bounds are asymptotically comparable. Specifically from the upper-bound provided in Theorem \ref{Thm:Main}, we deduce that 
$$\frac{{\rm MSE}_{M}^{\rm greedy}}{\mathbb{E}[2^{-\frac{M}{N}}]^{(1-\epsilon)}} \leq C_2 M{\mathbb{E}[2^{-\frac{M}{N}}]^{\epsilon}} $$
for any $\epsilon>0$. Moreover, Theorem \ref{Thm:SubExpSupPoly} implies that the quantity $M{\mathbb{E}[2^{-\frac{M}{N}}]^{\epsilon}}$ tends to 0 as $M\rightarrow +\infty$ and is therefore bounded from above. Using a similar argumentation for the lower-bound of Theorem \ref{Thm:Main}, we obtain the following corollary.
\begin{Corollary}\label{Prop:Tight}
  For any $\epsilon>0$, there are positive constants $C_{1,\epsilon},C_{2,\epsilon}>0$ such that 
\begin{equation}\label{Eq:Tightness}
     C_{1,\epsilon} \mathbb{E}[2^{-\frac{M}{N}}]^{(1+\epsilon)} \leq {\rm MSE}_{M}^{\rm greedy} \leq  C_{2,\epsilon} \mathbb{E}[2^{-\frac{M}{N}}]^{(1-\epsilon)},
\end{equation}
for all values of $M\geq 1$. 
\end{Corollary}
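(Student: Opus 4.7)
\medskip

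\noindent\textbf{Proof plan for Corollary \ref{Prop:Tight}.} The plan is to obtain both inequalities by algebraically manipulating the two-sided bound of Theorem \ref{Thm:Main}, using Theorem \ref{Thm:SubExpSupPoly} to control an auxiliary factor. The key observation is that both the desired upper and lower bound reduce, after a factorization of $\mathbb{E}[2^{-M/N}]$, to the single uniform estimate
\begin{equation*}
    K_\epsilon := \sup_{M \geq 1} M \, \mathbb{E}\bigl[2^{-M/N}\bigr]^{\epsilon} \;<\; +\infty.
\end{equation*}

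\smallskip

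\noindent\emph{Step 1: Super-polynomial decay of $\mathbb{E}[2^{-M/N}]$.} First I would combine the lower bound of Theorem \ref{Thm:Main}, $\mathbb{E}[2^{-M/N}] \leq C_1^{-1} M \cdot {\rm MSE}_M^{\rm greedy}$, with the super-polynomial decay of ${\rm MSE}_M^{\rm greedy}$ from Theorem \ref{Thm:SubExpSupPoly}. This yields that for every $k \in \mathbb{N}$,
\begin{equation*}
    M^k \,\mathbb{E}\bigl[2^{-M/N}\bigr] \;\leq\; C_1^{-1} M^{k+1} \,{\rm MSE}_M^{\rm greedy} \;\xrightarrow[M\to\infty]{}\; 0,
\end{equation*}
so that $\mathbb{E}[2^{-M/N}]$ itself decays faster than any polynomial.

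\smallskip

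\noindent\emph{Step 2: Boundedness of the auxiliary factor.} Given $\epsilon > 0$, apply Step 1 with any $k \geq \lceil 2/\epsilon\rceil$ to obtain $M^{2/\epsilon} \mathbb{E}[2^{-M/N}] \to 0$, and raise to the power $\epsilon$ to conclude that $M^2 \mathbb{E}[2^{-M/N}]^\epsilon \to 0$, so a fortiori $M \mathbb{E}[2^{-M/N}]^\epsilon \to 0$. Since this sequence is strictly positive at each finite $M$ (because $N \sim \mathrm{Pois}(\lambda)$ is a.s.\ finite, so $\mathbb{E}[2^{-M/N}] > 0$) and tends to $0$, its supremum $K_\epsilon$ is finite.

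\smallskip

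\noindent\emph{Step 3: Deriving the two bounds.} For the upper bound, start from ${\rm MSE}_M^{\rm greedy} \leq C_2 M \mathbb{E}[2^{-M/N}]$ and split $\mathbb{E}[2^{-M/N}] = \mathbb{E}[2^{-M/N}]^{\epsilon}\cdot\mathbb{E}[2^{-M/N}]^{1-\epsilon}$, giving
\begin{equation*}
    {\rm MSE}_M^{\rm greedy} \;\leq\; C_2\bigl(M \,\mathbb{E}[2^{-M/N}]^{\epsilon}\bigr)\,\mathbb{E}[2^{-M/N}]^{1-\epsilon} \;\leq\; C_2 K_\epsilon \,\mathbb{E}[2^{-M/N}]^{1-\epsilon},
\end{equation*}
so $C_{2,\epsilon} := C_2 K_\epsilon$ works. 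For the lower bound, start from ${\rm MSE}_M^{\rm greedy} \geq C_1 M^{-1} \mathbb{E}[2^{-M/N}]$ and split $\mathbb{E}[2^{-M/N}] = \mathbb{E}[2^{-M/N}]^{-\epsilon}\cdot\mathbb{E}[2^{-M/N}]^{1+\epsilon}$; since $M\mathbb{E}[2^{-M/N}]^{\epsilon} \leq K_\epsilon$ is equivalent to $M^{-1}\mathbb{E}[2^{-M/N}]^{-\epsilon} \geq K_\epsilon^{-1}$, one obtains
\begin{equation*}
    {\rm MSE}_M^{\rm greedy} \;\geq\; C_1 \bigl(M^{-1} \mathbb{E}[2^{-M/N}]^{-\epsilon}\bigr)\,\mathbb{E}[2^{-M/N}]^{1+\epsilon} \;\geq\; \frac{C_1}{K_\epsilon}\,\mathbb{E}[2^{-M/N}]^{1+\epsilon},
\end{equation*}
so $C_{1,\epsilon} := C_1 / K_\epsilon$ works, uniformly in $M \geq 1$.

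\smallskip

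\noindent\emph{Expected obstacle.} There is no genuine analytic obstacle; the corollary is essentially a bookkeeping exercise combining Theorems \ref{Thm:Main} and \ref{Thm:SubExpSupPoly}. The only point that requires a moment of care is ensuring that the bound $K_\epsilon < +\infty$ holds \emph{for all} $M \geq 1$ and not merely asymptotically, which is taken care of by noting that the sequence $M \mathbb{E}[2^{-M/N}]^\epsilon$ is strictly positive at every finite $M$ and tends to zero, hence is bounded on $\mathbb{N}$.
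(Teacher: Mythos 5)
Your proposal is correct and follows essentially the same route as the paper: both arguments factor $\mathbb{E}[2^{-M/N}]$ and reduce each inequality of Theorem \ref{Thm:Main} to the boundedness of $M\,\mathbb{E}[2^{-M/N}]^{\epsilon}$, which is obtained from the super-polynomial decay established via Theorem \ref{Thm:SubExpSupPoly}. Your Steps 1--2 merely spell out more explicitly (and correctly) why that factor tends to zero and why its supremum over all $M\geq 1$ is finite, details the paper leaves implicit.
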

 
  Our theoretical analysis validates  the two following observations in a rigorous manner: 
\begin{itemize}
    \item A piecewise constant function with a fixed number of jumps $n \geq 1$ is such that its greedy approximation in the Haar basis roughly behaves like $\mathcal{O}(2^{- M /n})$, which is exponential and therefore decays to 0 faster than any polynomial. Note that the exponential decay is faster for smaller values of $n$.
    \item The number of jumps $N$ of a compound Poisson is random. It is almost surely finite but can be arbitrarily large. The concrete effect is that the mean-square error of the greedy approximation roughly behaves like $\mathcal{O}(\mathbb{E}[2^{- M /N}])$.   The subexponential behavior of the MSE is then a consequence, as we have shown. 
\end{itemize}

It is worth noting that the characterization provided by Theorem \ref{Thm:SubExpSupPoly} is not deducible from earlier works that was based on the machinery of Besov regularity, such as \cite{fageot2017term}.   Previous works focus on the almost sure behavior of the approximation error, while we focus on the mean-square approximation on this paper. These are two different regimes and  to the best of our knowledge, the possible link between the two has not been investigated. 

By contrast, we obtain some information regarding the asymptotic behavior of best M-term approximation error of compound Poisson processes from Theorem \ref{Thm:SubExpSupPoly}. Indeed,  by combining \eqref{Eq:BleNleL} and \eqref{Eq:SuperPoly}, one  observes that 
\begin{align*}
 {\lim\sup}_{M\rightarrow +\infty } M^k {\rm MSE}_M^{\rm best} \leq  \lim_{M\rightarrow +\infty } M^k {\rm MSE}_M^{\rm greedy}=0,
\end{align*}
for any $k\in\mathbb{N}$.  Using the fact that MSE is non-negative simply implies that 
\begin{equation}
\lim_{M\rightarrow +\infty } M^k {\rm MSE}_M^{\rm best} = 0.
\end{equation}
  The super-polynomial decay of the greedy approximation error shows that this method, despite being very simple and easily implemented, reaches excellent approximation performances. In the next section, we will empirically show that the greedy scheme performs similarly to the practically uncomputable best $M$-term approximation scheme. 

\section{Numerical Illustrations}\label{sec:numerical}

In this section, we provide a numerical demonstration of  the main results of this paper. 
First, it is illustrative and reflects the potential practical impact of our theoretical claims in a complementary and empirical manner.
Second, it shows that the results obtained for the greedy approximation method are similar to what would be obtained for the best $M$-term approximation (Section \ref{sec:sparsevsbest}).
Finally, it   emphasizes that wavelets are able to exploit the inherent sparsity of non-Gaussian signals, which is not the case of traditional Fourier-based approximation schemes (Section \ref{sec:HaarFourier}).  These empirical observations then give rise to open theoretical questions that might be of interest to the community. 

To simulate each approximation 
scheme, we first generate a 
signal that consists of   $2^{10} = 1024$ equispaced samples of a 
given random process over 
$[0,1]$. We then compute its 
(discrete) Haar wavelet 
coefficients of scale up to 
$J_{\max}=10$. Finally, we create
the approximated signal according
to the given approximation scheme\footnote{For the best $M$-term approximation, we do not have access to the infinitely many wavelet coefficients but only to the ones up to a given scale ($J_{\max} = 10$ in this case). This means that we only have an approximation of the best $M$-term for our simulations. However, the variance of the wavelet coefficients decay with the scale $j$ like $2^{-2j}$ and the coefficients at larger scales are therefore very small with high probability. Our approximation of the best $M$ terms  is therefore  excellent.}. We repeat  each experiment 1000 times and we report the average to reduce the effect of the underlying randomness  (Monte Carlo method). The averaged values are then good approximations of the quantities of interest, that is, the MSEs given by \eqref{eq:MSEmethod} for   different approximation schemes.

\subsection{Greedy Approximation Error}
In the first experiment, we compute the MSE of greedy approximation for Brownian motions and compound Poisson processes with different values of $\lambda=10,50,100,500$ and with Gaussian jumps, as a function of the number $M$ of coefficients that are preserved.  We recall that, $N$ being the random number of jumps of the compound Poisson process $s$ over $[0,1]$,
$\lambda = \mathbb{E} [N]$ is the averaged number of jumps.
To have a fair comparison, we unify the variance of the random processes in all cases to be $\sigma_0^2=1$  (which corresponds to a law of jumps with variance $\sigma_0^2 / \lambda = 1 / \lambda$ for compound Poisson processes). 

The results are depicted in Figure \ref{fig:Nonlin}, where in each case  we plot the MSE   in log scale, that is $\log_2(M) \mapsto 10\log_{10}({\rm MSE})$.   From  Proposition \ref{Prop:LinError} and Corollary \ref{corol:LinSparse}, we expect that  the MSE of Brownian motion follows a global linear decay in the log scale, while decaying sub-linearly locally.
Indeed, for $M=2^J$, $J\in\mathbb{N}$, we deduce from \eqref{Eq:linerr} that 
$$10\log_{10}({\rm MSE}_M^{\rm greedy})= \alpha- \beta J, $$
where $\alpha=10 \log_{10}(\sigma_0^2/6)$ and $\beta=10 \log_{10}(2)$ which shows a linear decay with respect to $J=\log_2(M)$. However, in the regime when $J=\lfloor \log_2(M)\rfloor$ is fixed, that is when $2^J \leq M<2^{J+1}$, we obtain from   \eqref{Eq:linerrbis} that 
$$10\log_{10}({\rm MSE}_M^{\rm greedy})= \alpha- \beta (J+1)  
+ \beta \log_2\left(3 - \frac{M}{2^J}\right),$$ 
 which shows that the error decays  sub-linearly in this regime. These theoretical claims can be observed  in Figure \ref{fig:Nonlin}, as well. 

In addition, from Theorem \ref{Thm:SubExpSupPoly}, we  know that the MSE of compound Poisson processes in the log scale should   asymptotically decay faster than any straight line. This is also observable in Figure \ref{fig:Nonlin}, indicating  the dramatic difference between the compressiblity of compound Poisson processes and Brownian motions, as expected. 

\begin{figure}[t]
    \centering
    \includegraphics[ width=0.5\textwidth]{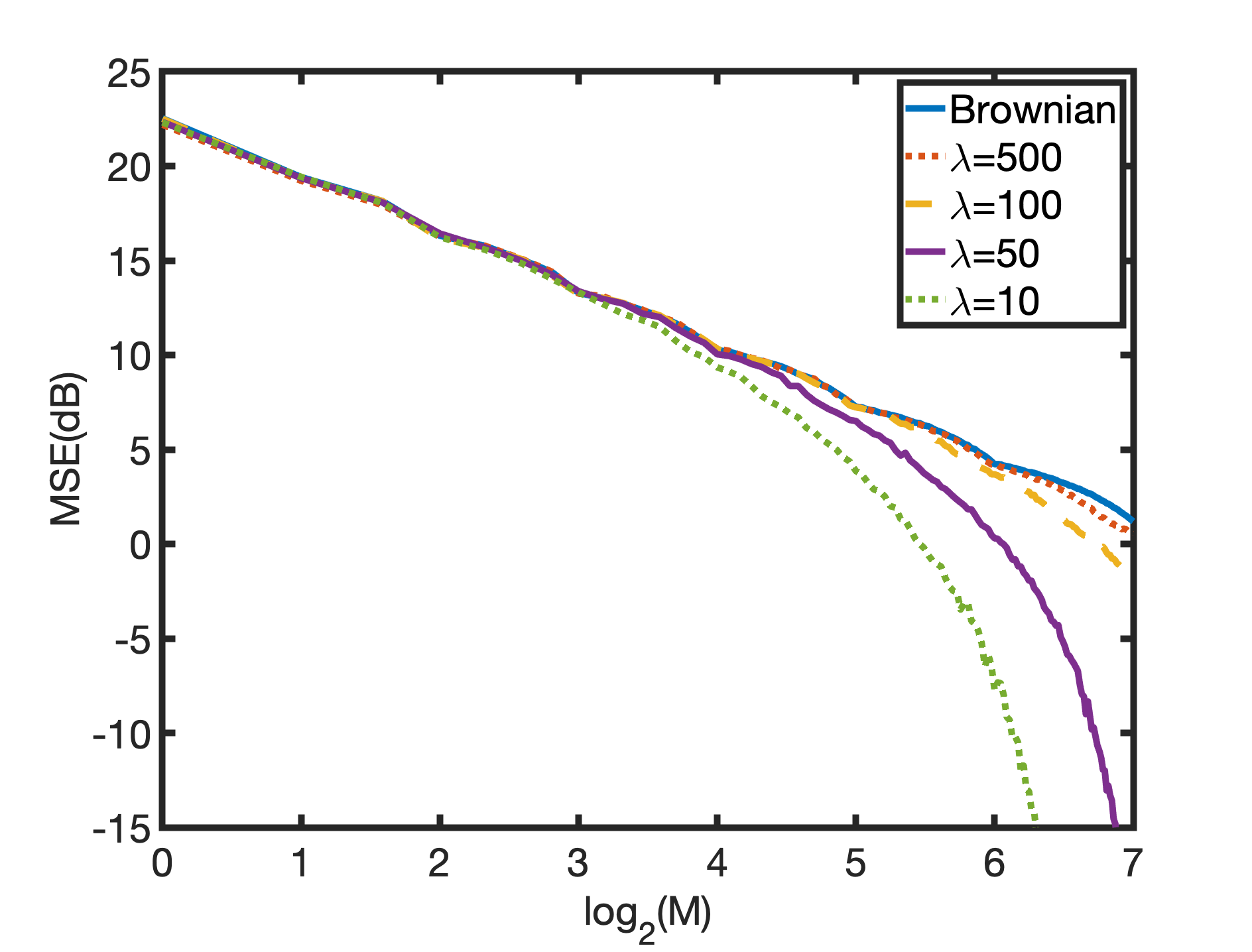}
    \caption{  Greedy approximations of Brownian motions and compound Poisson processes with different values of $\lambda$ and Gaussian jumps. We fix the variance to one in all cases. }
    \label{fig:Nonlin}
\end{figure}
We moreover remark in Figure \ref{fig:Nonlin} that the small-scale behavior ($\log_2(M) = J \leq 3$) does not distinguish between different values of $\lambda$, but also between compound Poisson processes and the Brownian motion. Again, this empirical fact has a theoretical counterpart: it is linked with the fact that the statistics of finite variance compound Poisson processes are barely distinguishable from the ones of the Brownian motion at coarse scales. This has been formalized in~\cite{fageot2019scaling} which states, when particularized to our case, that compound Poisson processes with finite variance converge to the Brownian motion when zoomed out and correctly renormalized. Our numerical experiments are illustrative to this point, and will be confirmed in Sections \ref{sec:sparsevsbest} and \ref{sec:HaarFourier}.

Finally, we observe  in Figure \ref{fig:Nonlin} that as $\lambda\rightarrow +\infty$, the greedy approximation of compound Poisson processes converges pointwise to the one of Brownian motion.  This empirical observation poses an interesting theoretical question which is also    consistent with~\cite[Theorem 5]{fageot_gaussian_2018}, which states---when specialized to our problem---that the compound Poisson process with constant variance $\sigma_0^2$ and Gaussian jumps converges in law to the Brownian motion when $\lambda \rightarrow \infty$.

\subsection{Greedy vs. Best $M$-term Approximation}
\label{sec:sparsevsbest}

As we have seen in the introduction, it is particularly satisfactory to characterize the compressibility of L\'evy processes via their best $M$-term approximation error in a given basis. Although our greedy approximation error only provides an upper-bound for the best $M$-term approximation error, we demonstrate numerically in Figure \ref{fig:LinNonlin} that the two approximation schemes are comparable in the sense of MSE.  This is also an important observation, as it reveals that the extremely simple greedy approximation performs  almost  as good as the best $M$-term approximation, the latter being a theoretical bound for M-term approximation schemes.

   \begin{figure}[t]
    \centering
    \includegraphics[ width=0.5\textwidth]{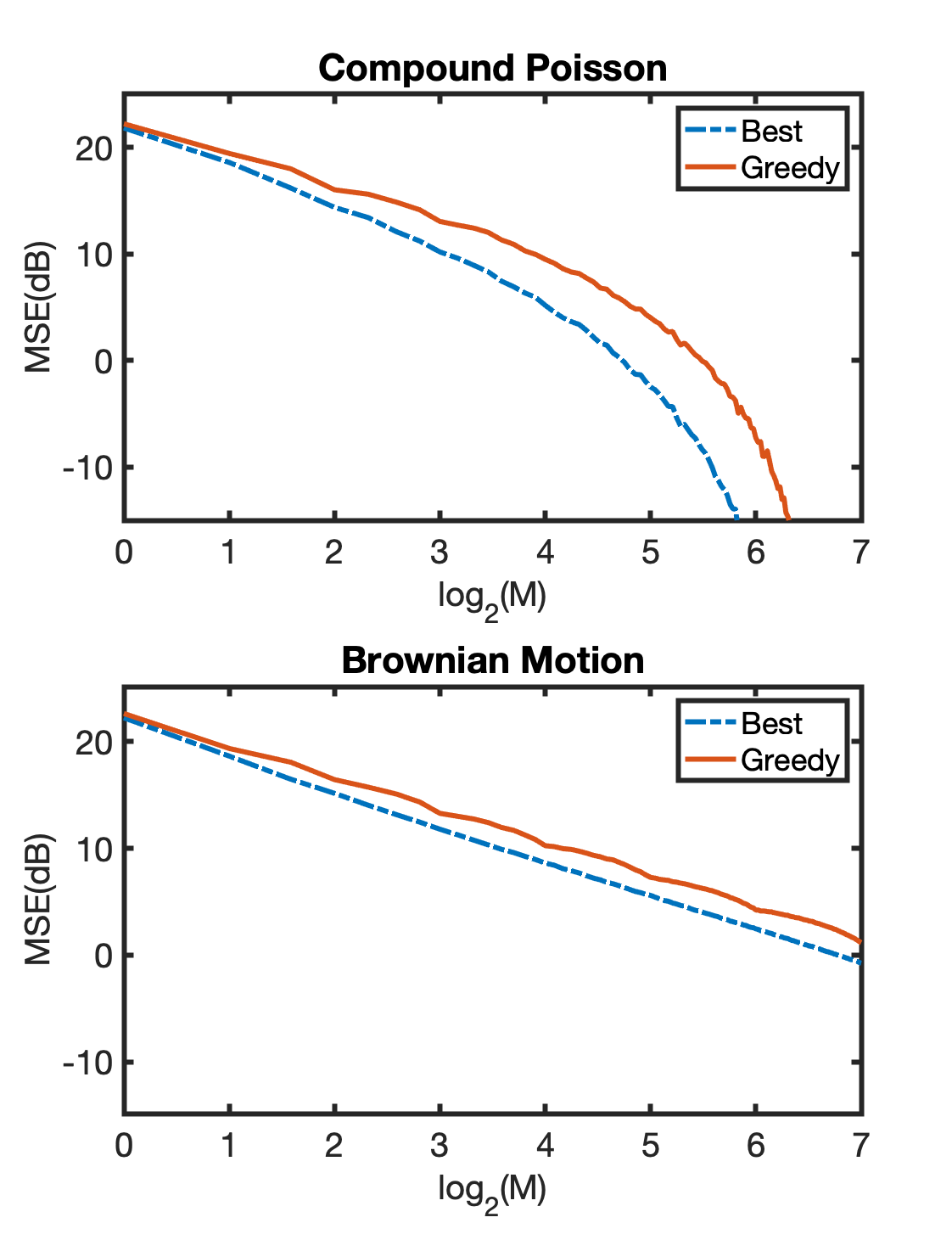}
    \caption{Greedy and Best $M$-term approximation of a compound Poisson process (top) with $\lambda=10$ and Gaussian jumps with a Brownian motion (bottom). We normalize both processes to have unit variance}
    \label{fig:LinNonlin}
\end{figure}
\subsection{Haar vs. Fourier}
\label{sec:HaarFourier}
We now investigate the effect of the dictionary in which we perform the approximation scheme. We consider  the Haar transform and discrete cosine transform (DCT) for approximating the Brownian motion and compound Poisson processes with Gaussian jumps.  The results are depicted in Figure \ref{fig:DCTvsHaar}, where we plot the best $M$-term approximation error of each setup in the log scale.

We observe that the DCT works slightly better than Haar for the Brownian motion. 
This is not surprising: The DCT is known to be asymptotically equivalent to the Karhunen-Lo\`eve transform (KLT), which is optimal for \emph{Gaussian} stationary processes~\cite{unser1984DCT}. It is worth noting that this is also valid for the Brownian motion, which is not stationary but still admits stationary increments.

 However, there is a dramatic difference between Haar and DCT for compound Poisson processes.
We see in Figure~\ref{fig:DCTvsHaar} that, contrary to the Haar dictionary, the DCT is unable to take advantage of the effective sparsity of compound Poisson processes. This is of course not a surprise and is folklore knowledge, but it has not yet been justified theoretically  for the best of our knowledge. 
 This is nevertheless  consistent with recent theoretical and empirical results demonstrating that wavelet methods outperform classical Fourier-based methods for the analysis of sparse stochastic processes~\cite{Unser2014sparse,fageot2017term}. 
   \begin{figure}[t]
    \centering
    \includegraphics[ width=0.5\textwidth]{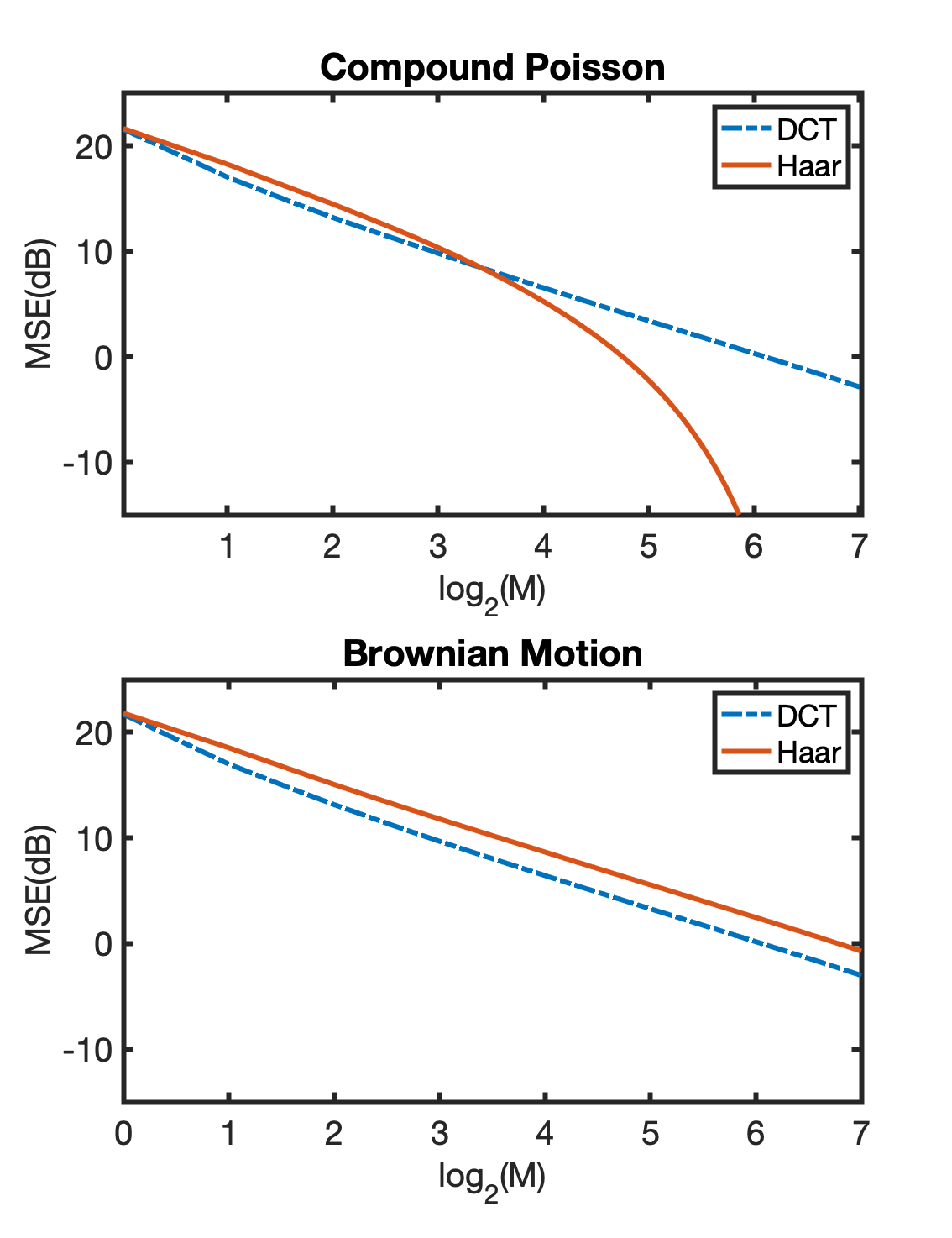}
    \caption{Comparison of DCT versus Haar wavelets to optimally represent (best $M$-term) a compound Poisson process (top) with $\lambda =10$ and Gaussian jumps with a Brownian motion (bottom). We normalize both random processes to have unit variance. }
    \label{fig:DCTvsHaar}
\end{figure}

\section{Conclusion}

 The theoretical and empirical findings of this paper are reminiscent to the so-called ``Mallat's heuristic"~\cite{donoho1993unconditional}, which states that
    
    \emph{``Wavelets are the best bases for representing objects composed of singularities, when there may be an arbitrary number of singularities, which may be located in all possible spatial position."}
        
    \noindent and which remarkably describes the compound Poisson model.

To do so, we provided a theoretical analysis to characterize the compressibility of compound Poisson processes. To that  end, we  introduced a simple approximation greedy scheme
performed over the Haar wavelet basis.  We then provided  comparable  lower and upper-bounds for the mean-squared approximation error. This enabled us to deduce the sub-exponential super-polynomial asymptotic  behavior for the error. Future research direction is to investigate the compressibility of compound Poisson processes in other dictionaries ({\it e.g.} DCT or other wavelet families), to investigate the effect of the Poisson parameter $\lambda$ in this analysis, specifically when $\lambda \rightarrow +\infty$,  and finally, to theoretically compare the best and greedy approximation schemes. 

\section*{Acknowledgment}
        
        The authors are extremely grateful to Prof. Michael Unser, who strongly inspired this work on its early stage. They also warmly thank Laur\`ene Donati for her kind help during the writing process.
    
\appendices

\section{Proof of Lemma \ref{Lemma:Poisson}}\label{App:LemmaPoisson}
\begin{proof}
We first remark that the inequality $\Delta \leq (b-a)/N$ is obviously true when $N=0$, since $\Delta = (b-a)$ in this case. As for $N\geq 1$, we have by  definition of $\Delta$ that $  \Delta \leq  x_i - x_{i-1}$, for all $i=1,\ldots,N$, with the convention that $x_0=a$. By summing up these equality for for all values of $i$, we obtain  that
\begin{equation}
N\Delta \leq x_N - x_0  \leq b-a. 
\end{equation}
This yields that  $\Delta  \leq (b-a)/N$. 

For the second part, we define the random  vector $\mathbf{d}= (d_1,\ldots,d_n)\in[0,1]^n$ as
\begin{equation}\label{Eq:di}
d_i= \frac{x_i-x_{i-1}}{b-a}, \quad i=1,2,\ldots,n,
\end{equation}
By rewriting \eqref{Eq:di} in the vectorial form, we obtain that
\begin{equation}\label{Eq:ChangeVar}
\mathbf{d}=\mathbf{H} \mathbf{x} - \frac{a}{b-a}\mathbf{e}_1,
\end{equation}
where  $\mathbf{e}_1=(1,0,\ldots,0)\in\mathbb{R}^n$, $\mathbf{x}=(x_1,\ldots,x_n)$ and $\mathbf{H}\in\mathbb{R}^{n\times n}$ is the lower-bidiagonal matrix  
\begin{equation}
\mathbf{H}=\frac{1}{b-a}\begin{pmatrix}1 &  &   & &   \\ -1 & 1 &  &  &   \\  & -1 & 1 &  &  \\  &  & \ddots & \ddots &  \\   & &  & -1 & 1\end{pmatrix}.
\end{equation}
Now, due to  \eqref{eq:densityt} and  the  change of variables \eqref{Eq:ChangeVar}, the PDF of $\mathbf{d}$ is
\begin{equation}\label{Eq:Pdfd}
p_{\mathbf{d}} (\boldsymbol{v}|N=n) =n!  \mathbbm{1}_{\boldsymbol{v}\in [0,1]^n,\|\boldsymbol{v}\|_1 \leq 1},
\end{equation}
where $\|{\bf v}\|_1 = |v_1|+\cdots+|v_n|=v_1+\cdots+v_n$ for ${\bf v} \in [0,1]^n$. In addition, from the definition of $\Delta$, the probability of  $\{\Delta\geq x\}$ for any $x\in[0,(b-a)/n]$ can be computed as 
\begin{align}
\mathbb{P}(\Delta \geq x|N=n) &= \mathbb{P}\left(\cap_{i=1}^n \{d_i \geq x/(b-a)\}|N=n\right) \nonumber
\\& =   \int_{[\frac{x}{b-a},1]^n} n!\mathbbm{1}_{\|\boldsymbol{v}\|_1 \leq 1} \mathrm{d}\boldsymbol{v} \nonumber
\\& = n!\int_{[0,1-\frac{x}{b-a}]^n} \mathbbm{1}_{\|\boldsymbol{u}\|_1 \leq 1-n\frac{x}{b-a}}  \mathrm{d}\boldsymbol{u}, \label{Eq:Integral}
\end{align} 
 where the latter is obtained via the change of variable $u_i=v_i-\frac{x}{b-a}$ for $i=1,\ldots,n$. We remark that if $u_i\geq 0$ for $i=1,\ldots,n$ and $\|\boldsymbol{u}\|_1 \leq 1-nx/(b-a)$, then we would have  $u_i \leq 1-nx/(b-a)\leq 1- x/(b-a)$ for any $i=1,\ldots,n$. In other words, the upper-limit of the integral in \eqref{Eq:Integral} is redundant and can be replaced with $+\infty$.  Doing so, we obtain that 
 \begin{align*}
  \mathbb{P}(\Delta \geq x|N=n) & = n!\int_{[0,+\infty)^n} \mathbbm{1}_{\|\boldsymbol{u}\|_1 \leq 1-n\frac{x}{b-a}}  \mathrm{d}\boldsymbol{u}\\& \stackrel{(i)}{=} \frac{n!}{2^n} \int_{\mathbb{R}^n} \mathbbm{1}_{\|\boldsymbol{u}\|_1 \leq 1-n\frac{x}{b-a}}  \mathrm{d}\boldsymbol{u} \\ & = \frac{n!}{2^n} \mathrm{Leb}\left(\left\{\|\boldsymbol{u}\|_1 \leq \left(1-n\frac{x}{b-a}\right)\right\}\right)
  \\& = \frac{n!}{2^n} \left(1-n\frac{x}{b-a}\right)^n \mathrm{Leb}\left(\{\|\boldsymbol{u}\|_1 \leq 1\}\right),
  \end{align*}
where   (i) is due to the symmetry of the integrand with respect to the sign of $\mathbf{u}$ and where $\mathrm{Leb}$ denotes the Lebesgue measure. Finally, we use a  known result stating that the volume of the $\ell_1$ unit ball in $\mathbb{R}^n$ is $2^n/n!$ \cite{wang2005volumes}. This yields to 
\begin{align*}
  \mathbb{P}(\Delta \geq x|N=n) &= \frac{n!}{2^n} \left(1-n\frac{x}{b-a}\right)^n \frac{2^n}{n!} \\&= \left(1-n\frac{x}{b-a}\right)^n.
\end{align*}
\end{proof}

\section{Proof of Proposition \ref{prop:HaarTield}}\label{App:HaarTield}
\begin{proof}
A simple computation reveals that $- \mathrm{D} \tilde{\psi}_{j,k} = \psi_{j,k}$. Hence, using the known identity $\mathrm{D}^*=-\mathrm{D}$ and \eqref{Eq:Dsw}, we have that
\begin{equation}
\langle s , \psi_{j,k} \rangle 
=
\langle s , - \mathrm{D} \tilde{\psi}_{j,k} \rangle
=
\langle \mathrm{D} s , \tilde{\psi}_{j,k} \rangle 
= 
\langle w , \tilde{\psi}_{j,k} \rangle.
\end{equation}
With a similar idea, we remark that $\mathrm{D} \tilde{\phi} =\delta -  \phi$. Combining with    $ \langle s , \delta \rangle = s(0) = 0$, we have that
\begin{equation}
    \langle s , \phi \rangle
    =
    \langle s,  \delta - \mathrm{D} \tilde{\phi} \rangle
    =
    s(0) + \langle \mathrm{D} s , \tilde{\phi} \rangle
    = 
    \langle w , \tilde{\phi} \rangle.
\end{equation}
\end{proof}
\section{Proof of Proposition \ref{Prop:LinError}}\label{App:linerr}
\begin{proof}
One  observes from  Definition \ref{Def:ApproximationSchemes} that 
\begin{equation*}\label{Eq:LinearApprox}
\mathrm{P}_M^{\rm lin}(s)= \langle s, \phi\rangle \phi + \sum_{j=0}^{J-1} \sum_{k=0}^{2^j-1} \langle s,\psi_{j,k}\rangle \psi_{j,k}+ \sum_{k=0}^{m-1} \langle s,\psi_{J,k}\rangle \psi_{J,k}.
\end{equation*}
This together with  \eqref{Eq:WaveletExpansionPoisson} yields that  
\begin{equation*}
s-\mathrm{P}_M^{\rm lin}(s) = \sum_{j\geq J+1}\sum_{k=0}^{2^j-1} \langle s,\psi_{j,k}\rangle \psi_{j,k}+\sum_{k=m}^{2^J-1} \langle s,\psi_{J,k}\rangle \psi_{J,k}.
\end{equation*}
Haar wavelets that are supported in $[0,1]$, form an orthonormal basis for $L_2([0,1])$. Using this, we   express the approximation error based on the wavelet coefficients, as 
\begin{equation*}
\|s-\mathrm{P}_M^{\rm lin}(s)\|_{L_2}^2 = \sum_{j\geq J+1}\sum_{k=0}^{2^j-1} |\langle s,\psi_{j,k}\rangle|^2+\sum_{k=m}^{2^J-1} |\langle s,\psi_{J,k}\rangle|^2.
\end{equation*}
By taking expectation over both sides and by using Proposition \ref{prop:HaarTield}, we have  that 
\begin{align*}
\mathbb{E}[\|s-\mathrm{P}_M^{\rm lin}(s)\|_{L_2}^2] & = \sum_{j\geq J+1} \sum_{ k=0}^{2^j-1} \mathbb{E}[ |\langle s,\psi_{j,k} \rangle |^2] \\& \quad +\sum_{ k=m}^{2^J-1} \mathbb{E}[ |\langle s,\psi_{J,k} \rangle |^2]\nonumber\\& = \  \sum_{j\geq J+1} \sum_{ k=0}^{2^j-1} \mathbb{E}[|\langle w, \tilde{\psi}_{j,k}\rangle |^2] \\ & \quad + \sum_{ k=m}^{2^J-1} \mathbb{E}[|\langle w, \tilde{\psi}_{J,k}\rangle |^2]\\&= \sum_{j\geq J+1} \sum_{ k=0}^{2^j-1} \sigma_0^2  \|\tilde{\psi}_{j,k}\|_{L_2}^2 \\& \quad  + \sum_{ k=m}^{2^J-1} \sigma_0^2  \|\tilde{\psi}_{J,k}\|_{L_2}^2  .
\end{align*}
Finally, we replace  $\|\tilde{\psi}_{j,k}\|_{L_2}^2 = \frac{2^{-2j}}{12}$ (obtained via a direct computation; see Figure \ref{fig:HaarTield} for visualisation) for all $j\geq 0$ and $k=0,\ldots,2^j-1$ in the summation above to deduce that 
\begin{align*}
\mathbb{E}\left[\|s-\mathrm{P}_M^{\rm lin}(s)\|_{L_2}^2\right] & = \frac{\sigma_0^2}{12}\left( \sum_{j\geq J+1} \sum_{k=0}^{2^j-1}2^{-2j} + \sum_{k=m}^{2^J-1}2^{-2J}  \right)\\&= \frac{\sigma_0^2}{12M}\left(\frac{1}{2^J}+\frac{2^J-m}{2^{2J}}\right)
\\& = \frac{\sigma_0^2}{12M}\frac{1}{2^J}\left(2-\frac{m}{2^J}\right).
\end{align*}
   \begin{figure}[t]
    \centering
    \includegraphics[ width=0.5\textwidth]{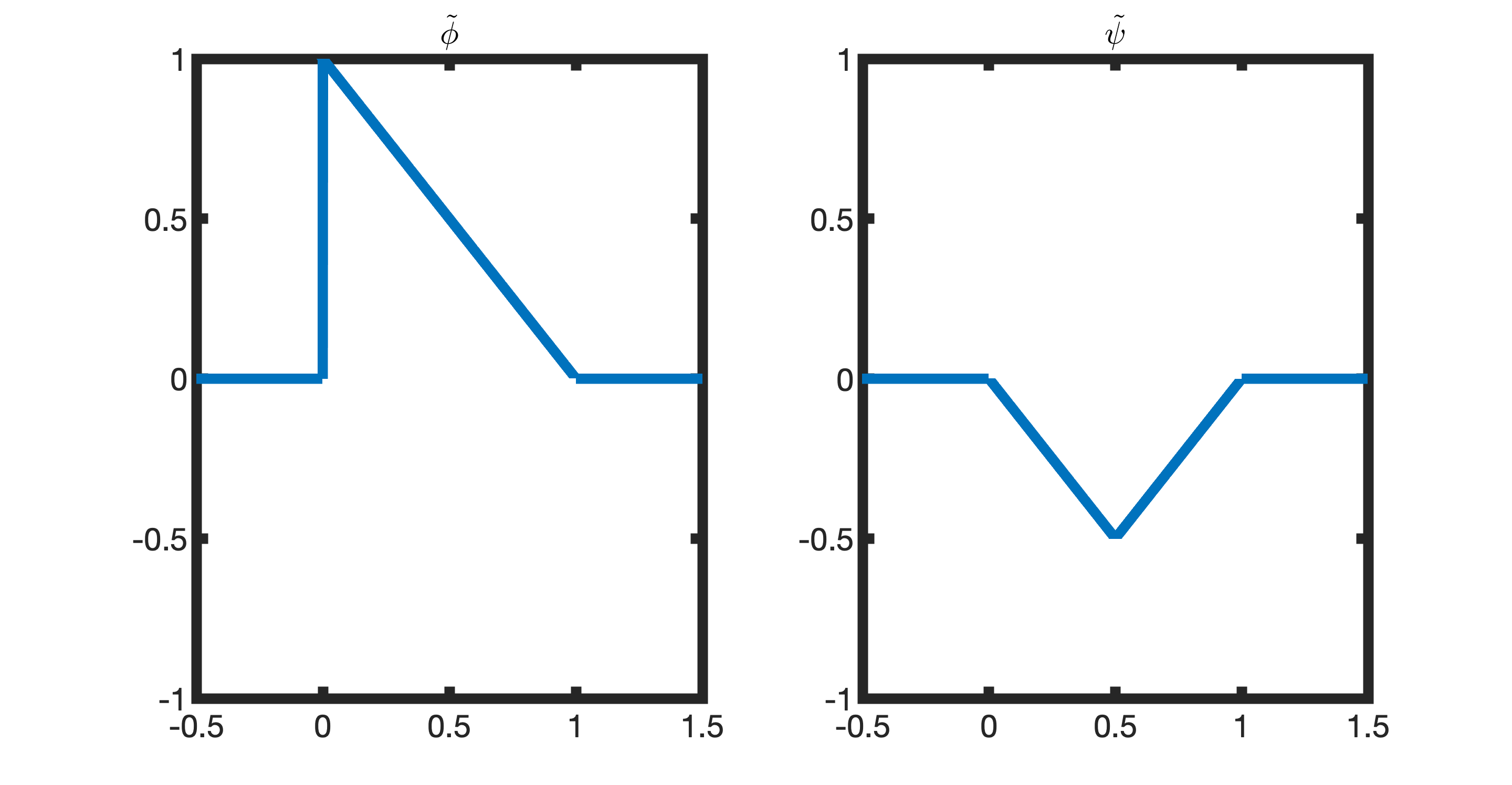}
    \caption{Auxiliary functions $\tilde{\phi}$ and $\tilde{\psi}_{j,k}$ for $j=0,1,2$ and $k=0,\ldots,2^j-1$.}
    \label{fig:HaarTield}
\end{figure}
\end{proof}

\section{Proof of Proposition \ref{Proposition:Wavelet}}\label{App:PropositionWavelet}
\begin{proof}
{\bf Item 1)} Assume that $K_{j,k}=0$. This means that $s$ is constant over the support of $\psi_{j,k}$, taking the fixed (random) value $s_0$. By recalling that 
\begin{equation}
\int_{\mathbb{R}} \psi_{j,k}(x) {\rm d } x  = 0, 
\end{equation}
for all $j\geq 0 $ and $k=0,\ldots,2^j-1$, we deduce that   the corresponding wavelet coefficient is $\langle s, \psi_{j,k}\rangle=s_0\int_{\mathbb{R}} \psi_{j,k}(x) {\rm d } x  = 0 $. 

For the converse, we show that, condition to the event $\{K_{j,k}=K \}$ for an arbitrary (but fixed) integer $K\geq 1$, we have that 
$$ \mathbb{P}\left( \langle s, \psi_{j,k}\rangle=0| K_{j,k}=K  \right)=0.$$
Consider the jumps that are inside the support of $\psi_{j,k}$ and denote their (unordered) locations and heights by $\{\tilde{\tau}_1,\ldots,\tilde{\tau}_{K}\}$ and $\{\tilde{a}_1,\ldots,\tilde{a}_{K}\}$, respectively. Due to \eqref{Eq:InnovWavelet}, we have that 
\begin{equation}
\langle s, \psi_{j,k}\rangle = \langle w, \tilde{\psi}_{j,k} \rangle= \sum_{i=1}^{K} \tilde{a}_{i} \tilde{\psi}_{j,k}(\tilde{\tau}_{i}).
\end{equation} 
We recall that  the jump locations $\tilde{\tau}_i$ are i.i.d. with  a uniform law. Moreover, the jump heights $\tilde{a}_i$ are independent of $\tilde{\tau}_i$s and are themselves  i.i.d.   copies of a random variable that admits PDF. This implies that the random variables $Z_i=\tilde{a}_{i}\tilde{\psi}_{j,k}(\tilde{\tau}_{i})$ for $i=1,\ldots,K$ are also i.i.d. and their law  has a PDF too, which  we denote by $p_Z$. Finally, the random variable $\langle s, \psi_{j,k}\rangle$ also has PDF (that is the $K$ times convolution of $p_Z$ with itself) and thus, is nonzero with probability one (no atoms). 

{\bf Item 2)} Recall that $N$ is the total number of jumps of $s$ over $[0,1]$. Due to  \eqref{Eq:14} and the fact that the wavelets $\psi_{j,k}$ for $k=0,\ldots,2^j-1$ have disjoint support,  at each scale $j\geq 1$, at most $N$ wavelet coefficients are nonzero. On the other hand, the support of any wavelet function of scale $j$ is of size $2^{-j}$. Hence,  due to the definition of $\Delta$, the number of jumps in the support of $\psi_{j,k}$ is either one or is upper-bounded by the length of the interval divided by the minimum distance  ($=2^{-j} \Delta^{-1}$). In other words, the support of each wavelet of scale $j$  contains at most $\max(1,2^{-j} \Delta^{-1})$ jumps. 

Denote by $n_j$, the number of nonzero wavelet coefficients in the $j$th scale. Using the previous observation,  we deduce for all $j\geq 1$ that 
\begin{equation}\label{Eq:NumNonZeroOneScale}
\frac{N}{\max(1,2^{-j}\Delta^{-1})} =N \min(1,2^j \Delta)\leq n_j \leq N.
\end{equation}
 As for $j=0$ (mother and father wavelets), we deduce similar to Item 1) that   condition to  $N\geq 1$, we have  $n_0=2$.  

By defining $J_{\lim}= \lfloor\log_2(\Delta^{-1})\rfloor$, one readily  verifies that for  $j\leq J_{\lim}$, we have  $\min(1,2^j \Delta)=2^j \Delta$. By contrast, $\min(1,2^j \Delta)=1$ for  $j\geq J_{\lim}$. Using these simple observations and  by summing up lower-bounds of \eqref{Eq:NumNonZeroOneScale} for $j=1,\ldots,J$  (together with $n_0=2$), we obtain, since $\sum_{j=0}^J n_j = N_J$, that 
\begin{align*}
2+ N\Delta(2^{J+1}-2) &\leq N_J,\quad \forall J \leq J_{\lim},\\
2+ N\Delta(2^{J_{\lim}+1}-2) + (J-J_{\lim}) N &\leq N_J,\quad \forall J\geq J_{\lim}.
\end{align*}
To simplify the first lower-bound, we use the inequality $2^x\geq x$ for $x=J+1+\log_2\Delta$, which results to 
\begin{equation}\label{Ineq:JleqJlim}
2-2N\Delta + N(J+1 +\log_2 \Delta ) \leq N_J,\quad \forall J \leq J_{\lim}.
\end{equation}
As for the the second lower-bound, we use 
$$J_{\lim} \leq \log_2(\Delta^{-1}) \leq J_{\lim}+1$$
 to obtain that
\begin{equation}\label{Ineq:JgeqJlim}
2+ N\Delta(\Delta^{-1}-2) + (J+\log_2 \Delta ) N \leq N_J,\quad \forall J\geq J_{\lim}.
\end{equation}
It is  now readily to verify that the two lower-bounds in \eqref{Ineq:JleqJlim} and \eqref{Ineq:JgeqJlim} are indeed equal and hence, we have that
\begin{align}
2+N(J+1-2\Delta+\log_2 \Delta) \leq N_J,\quad \forall J\geq 0. 
\end{align}
Finally, using $\Delta N \leq 1$, we conclude that 
\begin{align}\label{Eq:Jleq}
N(J+1+\log_2 \Delta) \leq N_J,\quad \forall J\geq 0. 
\end{align}
We   follow the same principle to obtain an upper-bound for $N_J$ as well. By summing up upper-bounds of \eqref{Eq:NumNonZeroOneScale} for $j=1,\ldots,J$, together with $n_0=2$, we obtain that 
\begin{equation}\label{Eq:Jgeq}
N_J \leq 2 + NJ, \quad \forall J\geq 0.
\end{equation}
Now, by the definition of $J_M$, we know that $N_{J_M} \geq M$. Combining it with  \eqref{Eq:Jgeq}  applied to $J=J_M$ yields that 
$$M \leq N_{J_M} \leq  NJ_M+2,$$
which implies the lower-bound
\begin{equation}\label{Eq:LbFinal}
 J_M \geq \frac{M-2}{N}.    
\end{equation}
Similarly, from the definition of $J_M$, we have $N_{J_M-1} \leq M-1$. This together with  \eqref{Eq:Jleq}  applied to $J=J_M-1$ gives
\begin{align*}
 {N(J_M +\log_2 \Delta)}\leq  N_{J_M-1} \leq M-1,
\end{align*}
from which we deduce the upper-bound
\begin{align}\label{Eq:UbFinal}
J_M \leq \frac{M-1}{N}   +\log_2 \Delta^{-1}.
\end{align}
We complete the proof of \eqref{Eq:JmBound} by combining \eqref{Eq:LbFinal} and \eqref{Eq:UbFinal}, knowing that $J_M\in\mathbb{N}$.
\end{proof}

\section{Proof of Theorem \ref{Thm:Main}} \label{App:Main} 
\begin{proof}[Proof of Theorem \ref{Thm:Main}]
 Let $\sigma_0^2<+\infty$ be the variance of the process $s$. We  divide the proof and  show each side of the inequality \eqref{Eq:MSEbounds} separately.

\textbf{Upper-bound}:   First, we show that for any $n\geq 1$, we have 
\begin{align*}
 \mathbb{E}\left[\|s- \mathrm{P}_{M}^{\rm greedy}(s) \|_2^2|N=n\right]  \leq \frac{\sigma_0^2 n}{6\lambda}  2^{-  \frac{M-2}{n}  }.
\end{align*}

Let us then work conditionally to $N=n$. From Proposition \ref{Proposition:Wavelet}, we have (condition to $N=n$) that  $J_M \geq \lceil \frac{M-2}{n}\rceil$. Thus, by combining with \eqref{Eq:JmSpLinBound}, we obtain that 
$$
\|s- \mathrm{P}_{M}^{\rm greedy}(s) \|_2  \leq \|s- \mathrm{P}_{2^{ \lceil \frac{M-2}{n}\rceil }}^{\rm lin}(s) \|_2. 
$$
Taking expectation from both sides yields 
\begin{align}
&\mathbb{E}\left[\|s- \mathrm{P}_{M}^{\rm greedy}(s) \|_2^2|N=n\right] \\& \quad\leq \mathbb{E}\left[\|s- \mathrm{P}_{2^{ \lceil \frac{M-2}{n}\rceil }}^{\rm lin}(s) \|_2^2|N=n\right]  \nonumber \\ & \quad= \sum_{j\geq \lceil \frac{M-2}{n}\rceil } \sum_{k=0}^{2^j-1} \mathbb{E}[|\langle s, \psi_{j,k} \rangle|^2|N=n], \label{Eq:UppBound}
\end{align}
On the other hand, condition  to  $N=n$, we have the equality in law
$$w=\mathrm{D}s= \sum_{i=1}^n a_i \delta(\cdot-\tau_i),$$
where $\{\tau_i\}_{i=1}^n$ is  the sequence of unordered jumps of $s$ in $[0,1]$ and $\{a_i\}_{i=1}^n$ is the sequence of corresponding  heights.   Therefore, we have that   
$$\langle s,\psi_{j,k}\rangle =\langle w,\tilde{\psi}_{j,k}\rangle =   
\sum_{i=1}^n Z_i,$$
where the random variables $Z_i = a_i \tilde{\psi}_{j,k}(\tau_i)$  are i.i.d. copies of a zero-mean random variable. We recall that the law of jumps $a_i$ has zero-mean and variance $\sigma_0^2/\lambda$. Hence, the  second-order moment of $Z_i$ can be  computed as
\begin{align*}
  \mathbb{E}[Z_i^2|N=n] & =\mathbb{E}[a_i^2 \tilde{\psi}_{j,k}(\tau_i)^2|N=n] \\&\stackrel{(i)}{=}\mathbb{E}[a_i^2|N=n] \mathbb{E}[ \tilde{\psi}_{j,k}(\tau_i)^2|N=n] \\& {=} \frac{\sigma_0^2}{\lambda} \int_{\mathbb{R}} \tilde{\psi}_{j,k}(x)^2 p_{\tau_i|N=n}(x) \mathrm{d}x \\& \stackrel{(ii)}{=} \frac{\sigma_0^2}{\lambda} \int_{0}^1 \tilde{\psi}_{j,k}(x)^2 \mathrm{d}x 
\\&=\frac{\sigma_0^2}{\lambda}\|\tilde{\psi}_{j,k}\|_2^2
\\& = \frac{\sigma_0^2\times 2^{-2j}}{12\lambda},
\end{align*}
where  we used the independence of $a_i$ and $\tau_i$ in (i) and the uniform law of $\tau_i$  in (ii) and finally,  we replaced  $\|\tilde{\psi}_{j,k}\|_{L_2}^2 = \frac{2^{-2j}}{12}$ in the last equality.  Now, due to the independence of the $Z_i$, we deduce that
\begin{equation}\label{Eq:WaveletMoments}
 \mathbb{E}[\langle s,\psi_{j,k}\rangle ^2|N=n] =\sum_{i=1}^n \mathbb{E}[Z_i^2|N=n]= n \frac{\sigma_0^2\times 2^{-2j}}{12\lambda}.
\end{equation}
By substituting \eqref{Eq:WaveletMoments} in \eqref{Eq:UppBound}, we obtain that 
\begin{align*}
 \mathbb{E}\left[\|s- \mathrm{P}_{M}^{\rm greedy}(s) \|_2^2|N=n\right] & \leq    \sum_{j\geq \lceil \frac{M-2}{n}\rceil} \sum_{k=0}^{2^j-1} n \frac{\sigma_0^2 2^{-2j}}{12\lambda} \\& =   \frac{\sigma_0^2 n}{12\lambda}     \sum_{j\geq \lceil \frac{M-2}{n}\rceil } 2^{-j} \\&= \frac{\sigma_0^2 n}{6\lambda}  2^{-\lceil \frac{M-2}{n}\rceil }\\&\leq \frac{\sigma_0^2 n}{6\lambda}  2^{- \frac{M-2}{n} },
\end{align*}

 By taking the expectation, we obtain that
\begin{align}
     \mathbb{E}\left[\|s- \mathrm{P}_{M}^{\rm greedy}(s) \|_2^2\right] &\leq  \sum_{n=1}^{\infty} \frac{\sigma_0^2 n}{6\lambda}  2^{- \frac{M-2}{n} } \mathbb{P}(N=n) \nonumber \\ &= 
     \sum_{n=1}^{M} \frac{\sigma_0^2 n}{6\lambda}  2^{- \frac{M-2}{n} } \mathbb{P}(N=n)  \nonumber
     \\& \quad + \sum_{n=M+1}^{\infty} \frac{\sigma_0^2 n}{6\lambda}  2^{- \frac{M-2}{n} } \mathbb{P}(N=n)\nonumber \\ & \leq \frac{2\sigma_0^2  }{3\lambda} M \mathbb{E}[2^{-\frac{M}{N}}]\nonumber  \\ & \quad +  \frac{2\sigma_0^2 }{3\lambda} \sum_{n=M+1}^{\infty} n \mathbb{P}(N=n),
     \label{Eq:MSEexpansion} 
\end{align}
where in the last inequality, we have used $2^{-\frac{M}{n}} \leq 1$ and $2^{\frac{2}{n}}\leq 4$ for all values of $M,n\geq 1$. Now, by invoking the relation $n\mathbb{P}(N=n) =n{\rm e}^{-\lambda} \lambda^n /n! = \lambda \mathbb{P}(N=n-1)$ for any $n\geq 1$, we deduce that 
\begin{equation*}
    \sum_{n=M+1}^{\infty} n \mathbb{P}(N=n) = \sum_{n=M+1}^{\infty} \mathbb{P}(N=n-1) = \lambda \mathbb{P}(N\geq M). 
\end{equation*}
On one hand, from the Chernov bound we have
\begin{equation}\label{Eq:PoissonTail}
 \mathbb{P}(N\geq M)\leq \mathbb{E}[{\rm e}^{tN}]{\rm e}^{-tM} = \mathrm{e}^{\lambda({\rm e}^t -1)} {\rm e}^{-tM}, \quad \forall t>0,
\end{equation}
where we used 
$$\mathbb{E}[{\rm e}^{tN}] = \sum_{n=0}^{\infty} {\rm e}^{-\lambda} \frac{{\rm e}^{tn} \lambda^n}{n!} = \sum_{n=0}^{\infty} {\rm e}^{-\lambda} \frac{(\lambda {\rm e}^{t})^n }{n!}={\rm e}^{\lambda(e^t-1)}$$.
Using \eqref{Eq:PoissonTail} with $t=\mathrm{ln}2$ (such that ${\rm e}^t=2$) yields 
\begin{equation*} 
 \sum_{n=M+1}^{\infty} n \mathbb{P}(N=n) \leq \mathbb{E}[{\rm e}^{tN}]{\rm e}^{-tM} \leq \lambda \mathrm{e}^{\lambda} 2^{-M}.
\end{equation*}
Hence, 
\begin{align*}
     \mathbb{E}\left[\|s- \mathrm{P}_{M}^{\rm greedy}(s) \|_2^2\right] &\leq     \frac{2\sigma_0^2  }{3\lambda} M \mathbb{E}[2^{-\frac{M}{N}}]  +  \frac{2\sigma_0^2 }{3}\mathrm{e}^{\lambda} 2^{-M}  \\ & = M \mathbb{E}[2^{-\frac{M}{N}}]\frac{2\sigma_0^2  }{3\lambda}  \left(1 + \frac{\lambda \mathrm{e}^{\lambda} 2^{-M}}{ M \mathbb{E}[2^{-\frac{M}{N}}]} \right)\\ & \leq M \mathbb{E}[2^{-\frac{M}{N}}]\frac{2\sigma_0^2  }{3\lambda}  \left(1 + \frac{\lambda \mathrm{e}^{\lambda} 2^{-M}}{ 2^{-M} \mathbb{P}(N=1) } \right)
     \\& =  M \mathbb{E}[2^{-\frac{M}{N}}]\frac{2\sigma_0^2  }{3\lambda}  ( 1+ \mathrm{e}^{2\lambda}), 
\end{align*}
which is the announced upper-bound with the constant $C_2 =\frac{2\sigma_0^2  }{3\lambda}  ( 1+ \mathrm{e}^{2\lambda}) $.

\textbf{Lower-bound:} Similar to the upper-bound, we show that for any $n\geq 1$, we have 
the inequality 
\begin{equation}\label{Eq:ConditionalMSE2}
\mathbb{E}[\|s- \mathrm{P}_{M}^{\rm greedy}(s) \|_2^2|N=n] \geq \frac{\sigma_0^2}{48{\rm e}\lambda}n^{-1} 2^{-\frac{M-1}{n}},
\end{equation}
which immediately implies the announced lower-bound.

We treat the case $N=1$ separately. Condition to $N= 1$, both wavelet coefficients of order zero (associated to mother and father wavelets) are nonzero. Moreover, for any $j\geq 1$, there is exactly one wavelet coefficient  of scale j that is nonzero. This implies that $J_M=M-2$ and in addition, we have that
\begin{equation}\label{Eq:LinSparse}
    \mathbb{E}[\|s- \mathrm{P}_{M}^{\rm greedy}(s) \|_2^2|N=1]  =\mathbb{E}[ \|s- \mathrm{P}_{2^{ M-1}}^{\rm lin}(s) \|_2^2|N=1].
\end{equation}
Similar to the proof of  Proposition \ref{Prop:LinError} and together with \eqref{Eq:WaveletMoments}, we  deduce that 
\begin{align*}
   &\mathbb{E}[ \|s- \mathrm{P}_{2^{ M-1}}^{\rm lin}(s) \|_2^2|N=1] 
   \\&\quad = \sum_{j\geq (M-1)} \sum_{k=0}^{2^J-1} \mathbb{E}[\langle s, \psi_{j,k} \rangle^2|N=1] 
   \\&\quad = \sum_{j\geq (M-1)} 2^j \frac{\sigma_0^2\times 2^{-2j}}{12 \lambda}
   \\& \quad = \frac{\sigma_0^2}{6\lambda }2^{-(M-1)} \\& \quad \geq \frac{\sigma_0^2}{48\lambda{\rm e}}2^{-(M-1)},
\end{align*}
which together with \eqref{Eq:LinSparse} proves \eqref{Eq:ConditionalMSE2} in this case.

Consider an arbitrary integer $n\geq 2$ and let us  work conditionally to  $N=n$. From the definition of $J_M$, we almost surely have that 
\begin{align*}
\|s- \mathrm{P}_{M}^{\rm greedy}(s) \|_2  \geq \|s- \mathrm{P}_{2^{ J_M+1}}^{\rm lin}(s) \|_2.
\end{align*}
This together with the right inequality of \eqref{Eq:JmBound}  implies almost surely that 
 \begin{align*}
&\|s- \mathrm{P}_{M}^{\rm greedy}(s) \|_2\geq \|s- \mathrm{P}^{\rm lin}_{2^{  \lfloor \frac{M-1}{n}+\log_2\Delta^{-1}\rfloor+1   }}(s) \|_2. 
 \end{align*}

By defining  $\delta=(2n^2-2n+2)^{-1}>0$ (the precise value will be used later) and $\overline{J}= \lfloor \frac{M-1}{n}+ \log_2 (\delta^{-1}) \rfloor+1$, we observe  that 
\begin{align}
&\mathbb{E}\left[\|s- \mathrm{P}_{M}^{\rm greedy}(s) \|_2^2|N=n\right] \nonumber\\&\quad \geq  \mathbb{E}\left[ \|s- \mathrm{P}^{\rm lin}_{2^{  \lfloor\frac{M-1}{n}+\log_2\Delta^{-1}\rfloor+1}}(s) \|_2^2|N=n\right]
 \nonumber\\& \quad\geq \mathbb{E}\left[ \mathbbm{1}_{\Delta\geq \delta} \|s- \mathrm{P}^{\rm lin}_{2^{  \lfloor \frac{M-1}{n}+\log_2\Delta^{-1}\rfloor +1  }}(s) \|_2^2|N=n\right] 
 \nonumber\\& \quad \geq \mathbb{E}\left[ \mathbbm{1}_{\Delta\geq \delta} \|s- \mathrm{P}^{\rm lin}_{2^{  \overline{J}}}(s) \|_2^2|N=n\right] \nonumber\\&\qquad \quad= \mathbb{E}[ \mathbbm{1}_{\Delta\geq \delta} \sum_{j\geq  \overline{J} }\sum_{k=0}^{2^j-1}  \langle s,\psi_{j,k}\rangle^2 |N=n]
\nonumber \\& \qquad \quad = \sum_{j\geq  \overline{J} }\sum_{k=0}^{2^j-1} \mathbb{E}[ \mathbbm{1}_{\Delta\geq \delta}    \langle s,\psi_{j,k}\rangle^2 |N=n].\label{Eq:IneqQ}
\end{align}

Similar to the upper-bound, we consider the   jumps of $s$ in $[0,1]$ and we denote their (unordered) locations and heights  by $\tau_1,\ldots,\tau_n$ and $a_1,\ldots,a_n$, respectively. With regard to the convention $\tau_0=0$, we consider the random variable $\tilde{\Delta}= \min_{0\leq i<j<n-1} |\tau_i-\tau_j|$ and consequently, the event 
$$E= \{\tilde{\Delta} \geq \delta\} \cap \{0\leq \tau_1, \ldots ,\tau_{n-1} \leq 1/2 - \delta\}.$$ 
We observe that condition to $E\cap\{N=n\}$, we have that
$$\Delta = \min\left(\tilde{\Delta},\min_{1\leq i \leq n-1} (\tau_n-\tau_i)\right)\geq \min(\tilde{\Delta},\delta) \geq \delta. $$
This implies that condition to $N=n$, we have
\begin{equation}\label{Eq:IndicatorIneq}
\mathbbm{1}_{E} \mathbbm{1}_{[1/2,1]}(\tau_n)  \leq \mathbbm{1}_{\Delta\geq \delta}.
\end{equation}
On the other hand, 
\begin{align*}
&\mathbb{E}\left[ \mathbbm{1}_{\Delta\geq \delta}     \langle s,\psi_{j,k}\rangle^2 |N=n\right] & \\&\quad = \mathbb{E}\left[ \mathbbm{1}_{\Delta\geq \delta}    \left(\sum_{i=1}^n a_i \tilde{\psi}_{j,k}(\tau_i)\right)^2  |N=n\right] 
\\&\quad = \mathbb{E}\left[  \left(\sum_{i=1}^n a_i \mathbbm{1}_{\Delta\geq \delta}   \tilde{\psi}_{j,k}(\tau_i)\right)^2  |N=n\right]
\\&\quad \stackrel{(i)}{=} \sum_{i=1}^n \mathbb{E}\left[  \left( a_i \mathbbm{1}_{\Delta\geq \delta}   \tilde{\psi}_{j,k}(\tau_i)\right)^2  |N=n\right]
\\&\quad \stackrel{(ii)}{=} \sum_{i=1}^n \mathbb{E}[a_i^2] \mathbb{E}\left[ \mathbbm{1}_{\Delta\geq \delta}   \tilde{\psi}_{j,k}^2(\tau_i)|N=n\right],
\end{align*}
where we used  the independence (condition to $N=n$) of jumps $\tau_i$ and heights $a_i$ of $s$  in (i)   and we used the  independence of $a_i$  from $N$ and $\Delta$ as well the fact that the law of $a_i$ has  zero mean in (ii). By substituting  $\mathbb{E}[a_i^2]=\frac{\sigma_0^2}{\lambda}$ and invoking \eqref{Eq:IndicatorIneq}, we obtain
\begin{align*}
&\mathbb{E}\left[ \mathbbm{1}_{\Delta\geq \delta}    |\langle s,\psi_{j,k}\rangle|^2 |N=n\right]      \\&\quad=  n\frac{\sigma_0^2}{\lambda} \mathbb{E}[ \mathbbm{1}_{\Delta\geq \delta}   \tilde{\psi}_{j,k}^2(\tau_n)|N=n] \\&\qquad \geq n\frac{\sigma_0^2}{\lambda} \mathbb{E}[ \mathbbm{1}_{E}\mathbbm{1}_{\tau_n\in[1/2,1]}   \tilde{\psi}_{j,k}^2(\tau_n)|N=n]
\\ & \qquad \quad = n\frac{\sigma_0^2}{\lambda} \mathbb{P}[ E|N=n]\mathbb{E}[\mathbbm{1}_{x_n\in[1/2,1]}   \tilde{\psi}_{j,k}^2(x_n)|N=n],
\end{align*}
where the latter is deduced from the independence of  $E$ and $\{1/2\leq \tau_n \leq 1\}$ (condition to $N=n$). By using  Lemma \ref{Lemma:Poisson} with $a=0$ and $b=1/2-\delta$ (we remind that $\delta=(2n^2-2n+2)^{-1}$), we can compute the conditional probability of the event $E$ as
\begin{align*}
\mathbb{P}(E|N=n)& = \left( 1- (n-1) \frac{\delta}{1/2- \delta} \right)^{(n-1)} \\
& = \left( 1- (n-1) \frac{(2n^2 - 2n +2 )^{-1} }{1/2- (2n^2 - 2n +2 )^{-1} } \right)^{(n-1)}
\\& = \left( 1- n^{-1}  \right)^{(n-1)}.
\end{align*}
Now, using Lemma \ref{Lemma:Poisson} and the above computation, we have that 

\begin{align*}
&\mathbb{E}\left[ \mathbbm{1}_{\Delta\geq \delta}    |\langle s,\psi_{j,k}\rangle|^2 |N=n\right]    \\&\geq   n\frac{\sigma_0^2}{\lambda} \mathbb{P}(E |N=n)  \int_{\frac{1}{2}}^{1}  \tilde{\psi}_{j,k}^2(x) \mathrm{d}x
\\&= \frac{\sigma_0^2n}{\lambda}  (1-n^{-1})^{(n-1)}  \|\tilde{\psi}_{j,k}\mathbbm{1}_{[1/2,1]}\|_2^2 
\\& \stackrel{(i)}{=}\frac{\sigma_0^2n}{\lambda}  (1-n^{-1})^{(n-1)} \mathbbm{1}_{k\geq 2^{j-1}} \|\tilde{\psi}_{j,k}\|_2^2
\\&\stackrel{(ii)}{\geq} \frac{\sigma_0^2n}{\lambda} {\rm e}^{-1}  \mathbbm{1}_{k\geq 2^{j-1}} \|\tilde{\psi}_{j,k}\|_2^2,
\end{align*} 
where (i) simply exploits that $\tilde{\psi}_{j,k}\mathbbm{1}_{[1/2,1]}=0$ for $k\leq 2^{j-1}-1$ together with $\tilde{\psi}_{j,k}\mathbbm{1}_{[1/2,1]}=\tilde{\psi}_{j,k}$ for $k\geq 2^{j-1}$ and (ii) uses $(1-n^{-1})^{(n-1)}\geq {\rm e}^{-1}$. Going back to  \eqref{Eq:IneqQ}, we obtain  for any $n\geq 2$  that 
\begin{align*}
&\mathbb{E}[\|s- \mathrm{P}_{M}^{\rm greedy}(s) \|_2^2|N=n]   \\&\geq  \sum_{j\geq  \overline{J} }\sum_{k=0}^{2^j-1} \frac{\sigma_0^2n}{\lambda}  {\rm e}^{-1}  \|\tilde{\psi}_{j,k}\|_2^2 \mathbbm{1}_{k\geq 2^{j-1}}
\\& = \sum_{j\geq \overline{J}}  \frac{\sigma_0^2n}{\lambda} e^{-1} \|\tilde{\psi}_{j,k}\|_2^2 2^{j-1}
\\&\stackrel{(i)}{=}  \frac{\sigma_0^2n}{12{\rm e}\lambda}  2^{-\overline{J}}
\\& =  \frac{\sigma_0^2}{12{\rm e}\lambda}   n 2^{- \lfloor \frac{M-1}{n}+\log_2\delta^{-1} \rfloor-1} 
\\&\stackrel{(ii)}{\geq} \frac{\sigma_0^2}{24{\rm e}\lambda} n 2^{-\frac{M-1}{n}} \delta
\\& \stackrel{(iii)}{=} \frac{\sigma_0^2}{48{\rm e}\lambda} \frac{n}{n^2-n+1} 2^{-\frac{M-1}{n}}
\\& \stackrel{(iv)}{\geq} \frac{\sigma_0^2}{48{\rm e}\lambda}n^{-1} 2^{-\frac{M-1}{n}},
\end{align*}
where (i) uses $\|\tilde{\psi}_{j,k}\|_{L_2}^2 = 2^{-2j}/12$, (ii) simply follows from $\lfloor \frac{M-1}{n}+\log_2\delta^{-1} \rfloor \leq   \frac{M-1}{n} +\log_2 \delta^{-1}$, (iii)  uses  the value of $\delta=(2n^2+2n-2)^{-1}$, and (iv) that $\frac{n}{n^2-n+1}\geq \frac{1}{n}$, due to $n^2-n+1 \leq n^2$ for any $n\geq 1$.
 Finally, we take the overall expectation to deduce that
\begin{align}
    \mathbb{E}[\|s- \mathrm{P}_{M}^{\rm greedy}(s) \|_2^2] &  \geq \sum_{n=1}^{\infty} \frac{\sigma_0^2}{48{\rm e}\lambda}n^{-1} 2^{-\frac{M-1}{n}} \mathbb{P}(N=n) \nonumber \\ & \geq   \frac{\sigma_0^2}{48{\rm e}\lambda} \sum_{n=1}^M n^{-1} 2^{-\frac{M}{n}} \mathbb{P}(N=n) \nonumber \\ & \geq  \frac{\sigma_0^2}{48{\rm e}\lambda}M^{-1} \sum_{n=1}^M  2^{-\frac{M}{n}} \mathbb{P}(N=n). \label{Eq:MSElbExp}
\end{align}
We note that 
\begin{align}
    \sum_{n=1}^M  2^{-\frac{M}{n}} \mathbb{P}(N=n) \geq  2^{-M} \mathbb{P}(N=1) = \lambda {\rm e}^{-\lambda} 2^{-M}.
\end{align}
Moreover, we use \eqref{Eq:PoissonTail} to deduce that 
\begin{align}
    \sum_{n=M+1}^{\infty}  2^{-\frac{M}{n}} \mathbb{P}(N=n) \leq   \mathbb{P}(N\geq M+1) \leq \lambda {\rm e}^{\lambda} 2^{-M}.
\end{align}
Combining the two inequalities with \eqref{Eq:MSElbExp} yields
\begin{align*}
    &\mathbb{E}[\|s- \mathrm{P}_{M}^{\rm greedy}(s) \|_2^2] \\  & \qquad \geq \frac{\sigma_0^2}{48{\rm e}\lambda}M^{-1} \mathbb{E}[2^{-\frac{M}{N}}] \frac{\sum_{n=1}^M  2^{-\frac{M}{n}} \mathbb{P}(N=n)}{\mathbb{E}[2^{-\frac{M}{N}}]} 
    \\ & \qquad \geq \frac{\sigma_0^2}{48{\rm e}\lambda}M^{-1} \mathbb{E}[2^{-\frac{M}{N}}] \frac{\lambda {\rm e}^{-\lambda} 2^{-M}}{\lambda {\rm e}^{-\lambda} 2^{-M} + \lambda {\rm e}^{\lambda} 2^{-M}} 
    \\ &  \qquad= M^{-1} \mathbb{E}[2^{-\frac{M}{N}}] \frac{\sigma_0^2}{48{\rm e}\lambda(1+{\rm e}^{2\lambda})},
\end{align*}
which yields the desired lower-bound with the constant $C_1= \frac{\sigma_0^2}{48{\rm e}\lambda(1+{\rm e}^{2\lambda})}$.
\end{proof}

\section{Proof of Theorem \ref{Thm:SubExpSupPoly}}\label{App:SubExpSupPoly}
\begin{proof}
It is sufficient to prove that the quantity $\mathbb{E}[2^{-\frac{M}{N}}]$ has  sub-exponential and super-polynomial   asymptotic behavior.

\textbf{Super-polynomiality:} First note that there exists an integer number $N_0\in \mathbb{N}$ such that for every $n\geq N_0$, we have $\mathbb{P}(N=n) \leq 2^{-n}$. We then consider the following decomposition for any $M\geq N_0+1$
\begin{align*}
\mathbb{E}[2^{-\frac{M}{N}}]&=  \sum_{n=1}^{N_0-1} \mathbb{P}(N=n)  2^{-\frac{M}{n}} &\\ &+  \sum_{n=N_0}^{M-1} \mathbb{P}(N=n)    2^{-\frac{M}{n}} \\& +   \sum_{n=M}^\infty \mathbb{P}(N=n)  2^{-\frac{M}{n}}.
\end{align*}
We    separately show that each term of the previous decomposition decays  faster than the inverse of any polynomial as $M\rightarrow\infty$. 

For the first term,  simply due to $\mathbb{P}(N=n) \leq 1$, we have that  
$$
M^k \sum_{n=1}^{N_0-1} \mathbb{P}(N=n)  2^{-\frac{M}{n}}  \leq (N_0-1)  M^k 2^{-\frac{M}{N_0}} \longrightarrow 0, 
$$
as $M\rightarrow\infty$. Regarding the second term, we use the bound $\mathbb{P}(N=n) \leq 2^{-n}$   for $n \geq N_0$ to deduce that 
$$
\forall n\geq N_0: \mathbb{P}(N=n)  2^{-\frac{M}{n}} \leq 2^{-n-\frac{M}{n}} \leq 2^{-2\sqrt{M}},
$$
where in the last inequality, we have used $x+y \geq 2\sqrt{xy}$ with $x=n$ and $y= M/n$. Hence,
$$
M^k \sum_{n=N_0}^{M-1} \mathbb{P}(N=n)  2^{-\frac{M}{n}} \leq    M^{k+1}   2^{-2\sqrt{M}} \rightarrow 0,
$$
as $ M\rightarrow\infty$. Finally for the last term, we use  \eqref{Eq:PoissonTail} with  $t=1$ to obtain that 
\begin{align*}
M^k \sum_{n=M}^\infty \mathbb{P}(N=n)  2^{-\frac{M}{n}} &\leq  M^k \sum_{n=M-1}^\infty \mathbb{P}(N=n) \\&= M^k \mathbb{P}(N\geq M) \\& \leq M^k {\rm e}^{\lambda({\rm e}-1)}{\rm e}^{ - M} \rightarrow 0, 
\end{align*}
 as $ M\rightarrow \infty$.

\textbf{Sub-exponentiality:} To show the sub-exponential behavior, we fix $\alpha>0$ and for all $n_0\geq 2$, we note that 
\begin{align*}
{\rm e}^{\alpha M}\mathbb{E}[2^{-\frac{M}{N}}]&\geq 2^{\log_2 ({\rm e}) \alpha M}    \mathbb{P}(N=n_0)    2^{-\frac{M}{n_0}} \\& =    \mathbb{P}(N=n_0)   2^{(\alpha \log_2 ({\rm e})-\frac{1}{n_0}) M}.
\end{align*}
 Now by fixing $n_0$ to be a sufficiently large integer so that $\log_2 {\rm e} \alpha-\frac{1}{n_0} >0$, we deduce that  the right hand side explodes.
\end{proof}

 \bibliographystyle{IEEEtran.bst}
\bibliography{ref.bib}

\end{document}